\theoremstyle{plain}
\newtheorem{proposition}{Proposition}
\newtheorem{theorem}{Theorem}
\newtheorem{lemma}{Lemma}
\newtheorem{corollary}{Corollary}
\theoremstyle{definition}
\newtheorem{remark}{Remark}
\newtheorem{example}{Example}
\newtheorem{definition}{Definition}
\newcommand{\pp}{\mathcal{P}}
\newcommand{\xx}{\mathcal{X}}
\newcommand{\yy}{\mathcal{Y}}
\newcommand{\cc}{\mathcal{C}}
\newcommand{\bb}{\mathcal{B}}
\newcommand{\ee}{\mathcal{E}}
\newcommand{\ff}{\mathcal{F}}
\newcommand{\hh}{\mathcal{H}} 
\newcommand{\kk}{\mathcal{K}}
\newcommand{\ti}{\mathcal{T}}
\newcommand{\mm}{\mathcal{M}} 
\newcommand{\hi}{\mathcal{H}} 
\newcommand{\ki}{\mathcal{K}} 
\newcommand{\lh}{\mathcal{L(H)}} 
\newcommand{\trh}{\mathcal{T(H)}} 
\newcommand{\ip}[2]{\left\langle \,#1\,\right|\left.\,#2\, \right\rangle} 
\newcommand{\kb}[2]{|#1\,\rangle\langle\,#2|} 
\newcommand{\no}[1]{\left\|#1\right\|} 
\newcommand{\notr}[1]{\no{#1}_{\textrm{tr}}} 
\newcommand{\tr}[1]{{\rm tr}\, (#1)} 
\newcommand{\de}{\,{\rm d}}
\newcommand{\ran}{\textrm{ran}\,} 
\newcommand{\M}{\mathsf{M}} 
\newcommand{\E}{\mathsf{E}} 
\renewcommand{\P}{\mathsf{P}} 
\newcommand{\f}{\mathcal{F}} 
\newcommand{\C}{\mathbb C} 
\newcommand{\N}{\mathbb N} 
\newcommand{\Z}{\mathbb Z} 
\newcommand{\half}{\frac{1}{2}}
\newcommand{\traccabis}[2]{\textrm{Tr}_{\hh_{#1}} \, (#2)}
\newcommand{\contrbis}[2]{{\rm ctr}_{\hh_{#1}} \, (#2)}
\newcommand{\elle}[1]{\mathcal{L}(#1)}
\newcommand{\boro}{\mathcal{B}(\Omega)}
\newcommand{\spanno}[1]{\textrm{span}\, \left\{ #1 \right\}}
\newcommand{\spannochiuso}[1]{\overline{\textrm{span}}\, \left\{ #1 \right\}}
\newcommand{\frecc}{\rightarrow}
\newcommand{\lft}{\left(}
\newcommand{\rgt}{\right)}
\newcommand{\scal}[2]{\left\langle\,#1\, ,\,#2\,\right\rangle}
\newcommand{\ev}{\textrm{ev}}
\begin{document}

\title[Extremal covariant positive operator valued measures]{Extremal covariant positive operator valued measures: the case of a compact symmetry group}

\author[Carmeli]{Claudio Carmeli}
\address{Claudio Carmeli, Dipartimento di Fisica, Universit\`a di Genova, and I.N.F.N., Sezione di Genova, Via
Dodecaneso 33, 16146 Genova, Italy}
\email{carmeli@ge.infn.it}

\author[Heinosaari]{Teiko Heinosaari}
\address{Teiko Heinosaari, Department of Physics, 
University of Turku, FIN-20014 Turku, Finland and Research Center for Quantum Information, Slovak Academy of Sciences, Bratislava, Slovakia}
\email{temihe@utu.fi}

\author[Pellonp\"a\"a]{Juha-Pekka Pellonp\"a\"a}
\address{Juha-Pekka Pellonp\"a\"a, Department of Physics, University of Turku, 
FIN-20014 Turku, Finland}
\email{juhpello@utu.fi}

\author[Toigo]{Alessandro Toigo}
\address{Alessandro Toigo, Dipartimento di Informatica, Universit\`a di Genova, and I.N.F.N., Sezione di Genova, Via
Dodecaneso 35, 16146 Genova, Italy}
\email{toigo@ge.infn.it}

\date{}

\maketitle

\begin{abstract}
Given a unitary representation $U$ of a compact group $G$ and a transitive $G$-space $\Omega$, we characterize the extremal elements of the convex set of all $U$-covariant positive operator valued measures.
\end{abstract}

\section{Introduction}\label{Intro}
 
In the modern theory of quantum mechanics, observables are represented as normalized positive operator valued measures (POVMs). The set of all POVMs having the same outcome space has natural convex structure. A convex mixture of two POVMs corresponds to a random choice between two measurement apparatuses. An extremal POVM thus describes an observable which is unaffected by this kind of randomness. 

In many applications one is interested in observables having some symmetry property. This is conventionally formulated as a covariance requirement with respect to a symmetry group. The covariance can arise from the symmetry of a particular problem \cite{PSAQT82} or the covariance can also be the defining property of some class of observables \cite{OQP97}. In this kind of situations the relevant set of observables is therefore the set of all covariant observables.

The determination of extremal covariant POVMs has long been a problem \cite{Holevo79}. In the case of a compact symmetry group $G$ the following results have been obtained earlier: When $G=\mathbb T$ acts on itself and the representation of $G$ has no multiplicities, the characterization of extremal covariant operator valued measures follows from \cite[Theorem~1]{LiTa94} if the representation space is finite-dimensional; in the infinite-dimensional case, the characterization is given in \cite[Theorem~1]{KiPe06}. If a compact group $G$ has a finite-dimensional representation, the determination of extremals is solved in \cite[Theorem~2]{Dariano04} in the case of $G$ acting on itself and in \cite[Theorem~3]{ChDa04} in the case of a transitive $G$-space. In this work we give a complete characterization of extremal covariant POVMs in the case of a compact group $G$ and an arbitrary representation (Theorem~\ref{th:notin} and Corollary~\ref{cor:ext}).

Our analysis of this problem proceeds in the following way. 
In Section~\ref{Basic} we fix the notations and recall some
concepts which are essential for our investigation. In Section~\ref{Structure} we give two characterizations of the structure of covariant POVMs. We remark that the characterization by means of families of sesquilinear forms (Section~\ref{Sec. Second characterization}) was already estabilished by Holevo in \cite{Kholevo87}. Finally, Section~\ref{Extremal} contains the main results which characterizes the extremal covariant POVMs. These are obtained by setting up a correspondence between the set of covariant POVMs and a family of reproducing kernel Hilbert spaces (RKHSs) of vector valued functions. In this way, a POVM is extremal if and only if its associated RKHS satisfies a particular property (Corollary~\ref{cor:ext}). In the finite dimensional case, such property is the algebraic condition found in \cite{Dariano04} and \cite{ChDa04}. We also show that a specific class of kernels, so-called rank 1 kernels, are always extremal.

\section{Basic definitions}\label{Basic}

For any complex Banach space $\xx$, we denote by $\no{\cdot}$ its norm and by $\xx^\ast$ its topological dual space. If $\yy$ is another Banach space, we let $\elle{\xx ; \yy}$ be the Banach space of bounded linear operators from $\xx$ to $\yy$, endowed with the uniform norm. If $\xx = \yy$, we use the abbreviated notation $\elle{\xx} \equiv \elle{\xx,\xx}$, and we denote by $O$ and $I$ the zero element and the identity of $\elle{\xx}$, respectively.
If $A\in\elle{\xx ; \yy}$, we write $A^\prime \in \elle{\yy^\ast ; \xx^\ast}$ for its adjoint.

If $\xx = \hh$ is a Hilbert space, we always assume that it is separable. We denote by $\ip{\cdot}{\cdot}$ the scalar product of $\hh$, and we take it linear in the second entry. When more than one space are involved and some confusion could arise, we add an index to norms and scalar products referring to the space under consideration. If $\kk$ is another Hilbert space and $A\in\elle{\hh;\kk}$, we denote by $A^\ast \in\elle{\kk ; \hh}$ the Hilbert space adjoint of $A$. For two selfadjoint operators $A,B\in\lh$, the relation $A\leq B$ means that $\ip{\psi}{A\psi} \leq \ip{\psi}{B\psi}$ for all $\psi\in\hi$. We denote by $\trh$ the Banach space of trace class operators on $\hh$ with the trace class norm $\notr{\cdot}$. An operator $T\in\trh$ is a \emph{state} if $T$ is positive and $\tr{T}=1$. 

\begin{definition}\label{def. di POVM 1}
Let $\Omega$ be a topological space and $\boro$ the Borel $\sigma$-algebra on $\Omega$. A mapping $\E:\boro\to\lh$ is a \emph{normalized positive operator valued measure (POVM)} on $\Omega$ if
\begin{itemize}
\item[(1)] $O\leq \E(X) \leq I$ for any $X\in\boro$;
\item[(2)] $\E(\Omega)=I$;
\item[(3)] $\E(\cup_i X_i)=\sum_i \E(X_i)$, the sum converging in the weak operator topology, for any sequence $(X_i)$ of disjoint Borel sets.
\end{itemize} 
\end{definition}
We remark that, by Ref.~\cite{DS}, p.~318, the sum in item (3) converges in the weak operator topology if and only if it converges in the strong operator topology.

We shortly recall the physical interpretation of a POVM as an observable. Loosely speaking, an observable is something which attaches a measurement outcome probability distribution to every state. If $\E$ is a POVM and $T$ a state, then the formula 
\begin{equation*}
p^\E_T(X)=\tr{T\E(X)},\quad X\in\boro,
\end{equation*}
defines a probability measure $p^\E_T$ on $\Omega$. Hence, $\E$ determines a mapping $T\mapsto p^\E_T$ from the set of states into the set of probability measures. This mapping is affine, i.e., it maps convex combinations of states to convex combinations of corresponding probability measures. Moreover, all affine mappings from the set of states into the set of probability measures can be represented as POVMs and this correspondence is one-to-one.

A special case of a POVM is a normalized projection valued measure (PVM). We recall that a POVM $\E$ is a PVM exactly when $\E(X\cap Y)=\E(X)\E(Y)$ for all $X,Y\in\boro$. PVMs are often referred as sharp observables.
 
In the following, $G$ is a fixed compact topological group, which is Hausdorff and satisfies the second axiom of countability. We also fix a closed subgroup $H$ of $G$ and, in the rest of this investigation, $\Omega$ is the quotient space $G/H$. The sets $H$ and $\Omega$ are compact, Hausdorff and second countable topological spaces in a natural way. We denote by $q$ the canonical projection from $G$ onto $\Omega$, and we also write $\dot{g}\equiv q(g)$.
The normalized Haar measures of $G$ and $H$ are denoted by $\mu_G$ and $\mu_H$, respectively.  The normalized $G$-invariant measure on $\Omega$ is denoted by $\mu_\Omega$. The following relation (see e.g. \cite{CAHA95}) will be used later:
\begin{equation}\label{Mackey-Bruhat}
\int_G f(g) \de \mu_G (g) = \int_\Omega \int_H f(gh) \de \mu_H (h) \de \mu_\Omega (\dot{g})
\end{equation}
for all $f\in L^1 ( G , \mu_G )$.

\begin{definition}
Let $U$ be a (strongly continuous) unitary representation of $G$ in a Hilbert space $\hi$. A POVM $\E:\boro \frecc \lh$ is {\em covariant} with respect to $U$ (or {\em $U$-covariant}, for short) if 
\begin{equation}
U(g) \E(X) U(g)^\ast = \E(gX)
\end{equation}
for all $g\in G$ and $X\in\boro$.
\end{definition}

As an example of $U$-covariant POVMs we recall the following result of Davies \cite[Section 4.5]{QTOS76}. Assume that the dimension of the Hilbert space $\hi$ is finite. Then the non normalized (i.e. not necessarily satisfying $\E(\Omega)=I$) $U$-covariant POVMs are in one-to-one correspondence with the positive operators $C\in\lh$ such that $CU(h)=U(h)C$ for all $h\in H$. The correspondence is given by the formula 
\begin{equation}\label{UCU}
\E(X)=\int_{q^{-1}(X)} U(g)CU(g)^\ast\ d\mu_G(g).
\end{equation}
The normalization condition now reads
\begin{equation}
\int_{G} U(g)CU(g)^\ast\ d\mu_G(g)=I.
\end{equation}

If $\dim\hi=\infty$, the formula (\ref{UCU}) still defines a $U$-covariant POVM. However, this is not an exhaustive characterization anymore.

\section{Structure of covariant POVMs}\label{Structure}

\subsection{Preliminaries and notations}

We denote by $\hat{G}$ the unitary dual of $G$, i.e. the (denumerable) set of (equivalence classes of) irreducible unitary representations of $G$. For each $\pi\in \hat{G}$, we let $\hi_\pi$ be the representation space of $\pi$ and set $d_\pi = \dim \hi_\pi<\infty$.

Let $U$ be a unitary representation of $G$ acting in the Hilbert space $\hi$. Then there exists a sequence of Hilbert spaces $( \ki_\pi )_{\pi\in\hat{G}}$ such that
$$
\hi = \oplus_{\pi\in\hat{G}} \hi_\pi \otimes \ki_\pi\quad \text{and}\quad U = \oplus_{\pi\in\hat{G}} \pi \otimes I_{\ki_\pi} .
$$
The dimension of $\ki_\pi$ is the {\em multiplicity} of $\pi$ in $U$, which is uniquely determined.

For convenience, if $A_\pi \in\elle{\hh_\pi \otimes \kk_\pi}$, we denote by $A_\pi$ also the bounded operator on $\hh$ that is $0$ on $( \hh_\pi \otimes \kk_\pi )^\perp$ and equals $A_\pi$ on $\hh_\pi \otimes \kk_\pi$.

Fix $\pi\in\hat{G}$. Since $\dim \hh_\pi < \infty$, we have the identifications $\hi_\pi\otimes\hi_\pi^\ast = \elle{\hi_\pi} = \ti (\hh_\pi)$, and the trace ${\rm tr}_{\hi_\pi} : \hi_\pi\otimes\hi_\pi^\ast \frecc \C$ is bounded. Let $\ki$ be a Hilbert space. The mapping
$$
E_\pi : \kk \frecc \oplus_{\rho\in\hat{G}} \hi_\rho\otimes\hi_\rho^\ast \otimes \kk ,\quad k\mapsto E_\pi k = I_{\hh_\pi} \otimes k
$$
is thus bounded. We denote ${\rm ctr}_{\hi_\pi} := E_\pi^\ast : \oplus_{\rho}\  \hi_\rho\otimes\hi_\rho^\ast \otimes \kk \frecc \kk$, so ${\rm ctr}_{\hi_\pi}$ is the contraction with respect to $\hi_{\pi}$. Clearly,
\begin{equation*}
{\rm ctr}_{\hi_\pi} (\phi_\rho) = \delta_{\rho \pi} ({\rm tr}_{\hi_\rho} \otimes I_\kk ) \phi_\rho \quad \forall \phi_\rho\in\hi_\rho\otimes\hi_\rho^\ast \otimes \kk .
\end{equation*}

We define also bounded linear maps 
$$
\ee_\pi : \ti (\kk_\pi) \frecc \ti (\hh), \qquad \ee_\pi T_\pi = I_{\hh_\pi} \otimes T_\pi,
$$
and $\textrm{Tr}_{\hh_\pi} = \ee_\pi^\prime : \elle{\hh} \frecc \elle{\kk_\pi}$. Explicitly, if $A_\rho\in\elle{\hh_\rho}$ and $B_\rho\in\elle{\kk_\rho}$, then
$$
\traccabis{\pi}{A_\rho \otimes B_\rho} = \delta_{\rho \pi} ( {\rm tr}_{\hh_\rho} A_\rho ) B_\rho .
$$

\subsection{First characterization}

\begin{theorem}\label{La proposizione}
Fix an infinite dimensional Hilbert space $\kk$. Let $\E:\boro \frecc \lh$ be a $U$-covariant POVM. Then there exists a family of isometries $V_\pi : \kk_\pi \frecc \hh_\pi^\ast \otimes \kk$ labeled by $\pi\in \hat{G}$ such that
\begin{equation}
\ip{w_\rho}{\E(X) v_\pi}
= \sqrt{d_\pi d_\rho} \int_{q^{-1} (X)} \ip{\contrbis{\rho}{(\rho(g)^{-1} \otimes V_\rho) w_\rho}}{ \contrbis{\pi}{(\pi(g)^{-1} \otimes V_\pi) v_\pi}} \de \mu_G (g) \label{La POVM}
\end{equation}
for all $X\in\boro$ and for all $v_\pi\in \hh_\pi \otimes \kk_\pi$, $w_\rho\in \hh_\rho\otimes\kk_\rho$.

Conversely, if $V_\pi : \kk_\pi \frecc \hh_\pi^\ast \otimes \kk$ is a family of isometries labeled by $\pi\in \hat{G}$, then equation~(\ref{La POVM}) defines a $U$-covariant POVM on $\Omega$.
\end{theorem}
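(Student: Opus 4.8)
The plan is to realize every covariant POVM as the compression $W^\ast\P W$ of the canonical multiplication PVM on the vector-valued space $L^2(G,\kk)$, the isometries $V_\pi$ being precisely the multiplicity-space components of the intertwiner $W$. To set up the canonical model, put on $L^2(G,\kk)$ the PVM $\P(X)=M_{\chi_{q^{-1}(X)}}$, $X\in\boro$, and the left regular representation $(L(g_0)f)(g)=f(g_0^{-1}g)$. A direct check using $\chi_{q^{-1}(X)}(g_0^{-1}\,\cdot\,)=\chi_{q^{-1}(g_0X)}$ gives $L(g_0)\P(X)L(g_0)^\ast=\P(g_0X)$. By Peter--Weyl, $L\cong\oplus_\pi\pi\otimes I_{\hh_\pi^\ast\otimes\kk}$, so $\pi$ occurs in $L$ with multiplicity space $\hh_\pi^\ast\otimes\kk$; since $U\cong\oplus_\pi\pi\otimes I_{\kk_\pi}$, Schur's lemma identifies the bounded $G$-intertwiners $W:\hh\frecc L^2(G,\kk)$ with families of maps $V_\pi:\kk_\pi\frecc\hh_\pi^\ast\otimes\kk$, the intertwiner attached to $(V_\pi)$ being $(Wv)(g)=\sum_\pi\sqrt{d_\pi}\,\contrbis{\pi}{(\pi(g)^{-1}\otimes V_\pi)v_\pi}$. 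One verifies $WU(g_0)=L(g_0)W$ from $\pi(g)^{-1}\pi(g_0)=\pi(g_0^{-1}g)^{-1}$.

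For the converse direction, given isometries $V_\pi$ I would form this $W$ and define $\E(X)=W^\ast\P(X)W$; then the positivity, the bound $\E(X)\le W^\ast W$, and the $\sigma$-additivity are inherited from $\P$, while covariance follows at once from $WU(g_0)=L(g_0)W$ together with $L(g_0)\P(X)L(g_0)^\ast=\P(g_0X)$, yielding $U(g_0)\E(X)U(g_0)^\ast=W^\ast\P(g_0X)W=\E(g_0X)$. Comparing matrix elements of $W^\ast\P(X)W$ with~(\ref{La POVM}) is then immediate. The one genuine computation is the normalization $\E(\Omega)=W^\ast W=I$: expanding $\contrbis{\pi}{(\pi(g)^{-1}\otimes V_\pi)v_\pi}$ in an orthonormal basis of $\hh_\pi$ rewrites it as $\sum_{i,j}\overline{\pi(g)_{ij}}\,c_{ij}$ with $c_{ij}\in\kk$ the components of $V_\pi$, so $\ip{w_\rho}{W^\ast W v_\pi}$ becomes a sum of integrals $\int_G\rho(g)_{kl}\overline{\pi(g)_{ij}}\de\mu_G(g)$; the Schur orthogonality relations collapse these (the prefactor $\sqrt{d_\pi d_\rho}$ absorbing the $1/d_\pi$), and $V_\pi^\ast V_\pi=I_{\kk_\pi}$ then delivers exactly $\ip{w_\rho}{v_\pi}$.

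For the direct direction I would pass to the minimal Naimark dilation $(\kk_\E,\P_\E,J)$ of $\E$, with $\E=J^\ast\P_\E J$. Minimality plus covariance produce a unique representation $\witi{U}$ on $\kk_\E$ with $\witi{U}(g)J=JU(g)$ and $\witi{U}(g)\P_\E(X)\witi{U}(g)^\ast=\P_\E(gX)$, i.e.\ a transitive system of imprimitivity based on $\Omega=G/H$. By the imprimitivity theorem such a system embeds, unitarily and $\P$-equivariantly, as a subsystem of the canonical one on $L^2(G,\kk)$ --- the fixed infinite-dimensional $\kk$ being exactly what is needed to host all the (countable) multiplicities of $\witi{U}$. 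Transporting $J$ through this embedding yields an intertwiner $W$ with $\E=W^\ast\P W$, and decomposing $W$ over multiplicity spaces returns the isometries $V_\pi$.

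I expect this last identification --- matching the covariant dilation with a subsystem of $(L^2(G,\kk),L,\P)$ and checking that the fixed $\kk$ accommodates the multiplicities of $\witi{U}$ --- to be the main obstacle; it can be handled either by invoking Mackey's imprimitivity theorem or, more in the spirit of the reproducing-kernel viewpoint announced in the introduction, by building $\kk_\E$ directly as the Kolmogorov (reproducing-kernel) space of the covariant positive-definite kernel determined by $\E$. The converse, by contrast, reduces to the single Schur-orthogonality computation above.
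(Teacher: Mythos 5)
Your proposal is correct and follows essentially the same route as the paper: dilate the covariant POVM to a transitive system of imprimitivity, embed it into the canonical system $(\lambda\otimes I_\kk,\P\otimes I_\kk)$ on $L^2(G,\mu_G;\kk)$ (the paper does the dilation and imprimitivity step in one stroke via Cattaneo's POVM version of Mackey's theorem), and then parametrize the intertwining isometries $\hh\frecc L^2(G,\mu_G;\kk)$ by families $V_\pi:\kk_\pi\frecc\hh_\pi^\ast\otimes\kk$ through Peter--Weyl and Schur's lemma. The Schur-orthogonality computation you single out for the normalization is exactly the paper's observation that the Fourier--Plancherel cotransform is unitary, so the two arguments coincide in substance.
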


Before proving Theorem~\ref{La proposizione}, we recall some notions from the theory of induced representations. Let $\lambda$ be the left regular representation of $G$ in $L^2 (G,\mu_G)$, and $\P:\boro \frecc \elle{L^2 (G,\mu_G)}$ be the following $\lambda$-covariant PVM:
\begin{equation*}
[\P(X) f] (g) = \chi_{q^{-1}(X)} (g) f(g), \quad f\in L^2 (G,\mu_G).
\end{equation*}
(Here $\chi_{q^{-1}(X)}$ is the characteristic function of the set $q^{-1}(X)$ in $G$).
If $\sigma$ is a unitary representation of $H$ acting in the Hilbert space $\kk_0$, the {\em representation of $G$ induced by $\sigma$} is the left regular representation $\lambda\otimes I_{\kk_0}$ of $G$ in $L^2 (G,\mu_G) \otimes \kk_0 = L^2 (G,\mu_G;\kk_0)$ restricted to the closed invariant subspace
\begin{equation*}
\hh^\sigma = \left\{ f\in L^2 (G,\mu_G;\kk_0) \mid \forall h\in H,\, f(gh) = \sigma(h)^{-1} f(g)\textrm{ for almost any }g\in G \right\}.
\end{equation*}
We denote by $\lambda^\sigma$ such restriction. Clearly $\hh^\sigma$ is invariant for the PVM $\P \otimes I_{\kk_0}$, and the restriction $\P^\sigma$ of $\P \otimes I_{\kk_0}$ to $\hh^\sigma$ is the {\em canonical PVM induced by $\sigma$}. It is easy to see that $\P^\sigma$ is $\lambda^\sigma$-covariant.

\begin{proof}[Proof of Theorem~\ref{La proposizione}]
Let $\E$ be a $U$-covariant POVM on $\Omega$. By the imprimitivity theorems of Mackey \cite{Mackey52} and Cattaneo \cite{Cattaneo79}, there exists a unitary representation $\sigma$ of the subgroup $H$ acting in a Hilbert space $\kk_0$, and an isometry $W:\hh\frecc\hh^\sigma$ intertwining $U$ with $\lambda^\sigma$ such that
\begin{equation*}
\E(X) = W^\ast \P^\sigma (X) W \quad \forall X\in \boro.
\end{equation*}

Let $\imath:\kk_0 \frecc \kk$ be an injection of Hilbert spaces ($\imath$ exists since $\dim\kk = \infty$ is assumed).
Composing $W$ with the injections $\hh^\sigma \hookrightarrow L^2(G,\mu_G) \otimes \kk_0$ and $L^2(G,\mu_G) \otimes \kk_0 \stackrel{I\otimes \imath}{\hookrightarrow} L^2(G,\mu_G) \otimes \kk$, we find an isometry $\tilde{W} : \hh \frecc L^2 (G,\mu_G;\kk)$ intertwining $U$ with $\lambda\otimes I_\kk$ such that
\begin{equation}\label{eq. di M. C.}
\E(X) = \tilde{W}^\ast ( \P (X) \otimes I_\kk ) \tilde{W} \quad \forall X\in \boro.
\end{equation}
Conversely, if $\tilde{W} : \hh \frecc L^2 (G,\mu_G;\kk)$ intertwines $U$ with $\lambda\otimes I_\kk$, then  equation~(\ref{eq. di M. C.}) defines a $U$-covariant POVM $\E$ on $\Omega$.
The problem of characterising all $U$-covariant POVMs on $\Omega$ is thus seen to be equivalent to the problem of finding all intertwining isometries between $U$ and $\lambda\otimes I_\kk$.

By Fourier-Plancherel theory, there is a unitary isomorphism
\begin{equation*}
\overline{\f}^\ast : \oplus_{\pi\in\hat{G}} \hh_\pi \otimes \hh_\pi^\ast \otimes \kk \frecc
L^2 (G,\mu_G;\kk)
\end{equation*}
intertwining the representation $\oplus_{\pi\in\hat{G}} \pi \otimes I_{\hh_\pi^\ast} \otimes I_\kk$ with $\lambda \otimes I_\kk$, whose action on 
$\phi_\pi \in \hh_\pi \otimes \hh_\pi^\ast \otimes \kk$ is
\begin{equation*}
\overline{\f}^\ast \phi_\pi (g) = \sqrt{d_\pi}\, \contrbis{\pi}{(\pi (g)^{-1} \otimes I_{\hh_\pi^\ast} \otimes I_\kk ) \phi_\pi}
\end{equation*}
($\overline{\f}^\ast$ is the inverse Fourier-Plancherel cotransform of $L^2 (G,\mu_G)$ tensored with the identity of $\kk$).
For each $\pi\in\hat{G}$, let $V_\pi : \kk_\pi \frecc \hh_\pi^\ast \otimes \kk$ be an isometry. Then $V = \oplus_{\pi} I_{\hh_\pi} \otimes V_\pi$ is an isometry from $\hh$ into $\oplus_{\pi} \hh_\pi \otimes \hh_\pi^\ast \otimes \kk$ which intertwines the representations $U$ and $\oplus_{\pi} \pi \otimes I_{\hh_\pi^\ast} \otimes I_\kk$. Moreover, every isometry intertwining $U$ and $\oplus_{\pi} \pi \otimes I_{\hh_\pi^\ast} \otimes I_\kk$ has this form for some choice of the isometries $V_\pi$.

Fixed the sequence of isometries $\{V_\pi\}_{\pi\in\hat{G}}$, the corresponding intertwining isometry $\tilde{W} : \hh \frecc L^2 (G,\mu_G;\kk)$ is thus
\begin{equation*}
\tilde{W}v_\pi = \overline{\f}^\ast V v_\pi = \sqrt{d_\pi}\, \contrbis{\pi}{( \pi (\,\cdot\,)^{-1} \otimes V_\pi ) v_\pi }
\end{equation*}
for $v_\pi \in \hh_\pi \otimes \kk_\pi$.

If $v_\pi \in \hh_\pi \otimes \kk_\pi$ and $w_\rho \in \hh_\rho \otimes \kk_\rho$, equation~(\ref{eq. di M. C.}) then gives
\begin{eqnarray*}
\ip{w_\rho}{\E(X) v_\pi} & = & 
\ip{\tilde{W} w_\rho}{( \P(X) \otimes I_\kk ) \tilde{W} v_\pi} \\
& = & \sqrt{d_\pi d_\rho} \int_{q^{-1} (X)} \ip{\contrbis{\rho}{( \rho(g)^{-1} \otimes V_\rho ) w_\rho}}{\contrbis{\pi}{( \pi(g)^{-1} \otimes V_\pi ) v_\pi}} \de \mu_G (g)
\end{eqnarray*}
as claimed.
\end{proof}

To illustrate the content of Theorem~\ref{La proposizione}, we take a closer look on two special cases. We note that in these cases of an abelian group, the characterization of covariant POVMs is also given in \cite{CaDeTo04}.

\begin{example}\label{ex:abelian}
Suppose that $G$ is a compact abelian group. Then also $\Omega$ is a compact abelian group in the natural way. For each $\pi\in \hat{G}$, the representation space $\hi_\pi$ is one dimensional and $\hat{G}$ is a discrete abelian group, i.e.~the character group of $G$. We denote by $H^\perp$ the annihilator of the subgroup $H$, that is,
\begin{equation*}
H^\perp = \left\{ \pi \in \hat{G} \mid \pi(h) = 1 \ \forall h\in H \right\}.
\end{equation*}
The annihilator $H^\perp$ is a subgroup of $\hat{G}$ and it can be identified with the character group $\hat{\Omega}$, the identification being $\pi (q(g)) : = \pi (g)$ for all $g\in G$ and $\pi \in H^\perp$. Finally, let $\f_{\Omega}$ be the Fourier transform of $\Omega$, i.e.~for $f\in L^1 (\Omega,\mu_{\Omega})$,
\begin{equation*}
\left( \f_{\Omega} f \right) (\pi) = \int_{\Omega} f(\omega) \pi(\omega^{-1}) \, d\mu_{\Omega} (\omega) \quad \forall \pi \in 
H^\perp .
\end{equation*}

We have the identifications $\hh_\pi \otimes \kk_\pi = \kk_\pi$ and $\hh_\pi \otimes \hh_\pi^\ast \otimes \kk = \kk$. With the last identification we have ${\rm ctr}_{\hh_\pi} = I_\kk$.

By Theorem~\ref{La proposizione}, if $\E$ is a $U$-covariant POVM on $\Omega$, there exists a family of isometries $V_\pi : \kk_\pi \frecc \kk$ labeled by $\pi \in \hat{G}$, such that
\begin{eqnarray*}
\ip{w_\rho}{\E(X) v_\pi} &=& \int_{q^{-1} (X)} \left( \pi^{-1}\rho \right)(g) \ip{V_\rho w_\rho}{V_\pi w_\pi} \de\mu_G (g) \\
\textrm{\emph{by eq.}~(\ref{Mackey-Bruhat})} &=& \ip{V_\rho w_\rho}{V_\pi v_\pi} \int_X \int_H \left( \pi^{-1}\rho \right)(gh) \de\mu_H (h) \de\mu_{\Omega} (\dot{g}) \\
&=& \ip{V_\rho w_\rho}{V_\pi v_\pi} \delta_{\rho H^\perp}(\pi H^\perp)\int_X \left( \pi^{-1}\rho \right)(g) \de\mu_{\Omega} (\dot{g}) \\
&=& \ip{V_\rho w_\rho}{V_\pi v_\pi} \delta_{\rho H^\perp}(\pi H^\perp) \left( \f_{\Omega} \chi_X \right) \left( \rho^{-1}\pi \right)
\end{eqnarray*}
for $v_\pi \in \kk_\pi$, $w_\rho \in \kk_\rho$. (In the above equation, $\delta_{\rho H^\perp}(\pi H^\perp) = 1$ if $\rho$ and $\pi$ are in the same $H^\perp$-coset, and is $0$ otherwise; note that $\rho^{-1}\pi$ is in $H^\perp$ exactly when $\delta_{\rho H^\perp}(\pi H^\perp) = 1$).
\end{example}

\begin{example}\label{ex:multipfree}
Consider the situation of Example \ref{ex:abelian} in the case when there is a set $\Sigma \subseteq \hat{G}$ such that $\dim \kk_\pi = 1$ for all $\pi\in\Sigma$ and $\dim \kk_\pi = 0$ otherwise. For each $\pi\in\Sigma$, choose a unit vector $e_\pi \in \kk_\pi$. Since $V_\pi$ is an isometry, there is a unit vector $v_\pi \in \kk$ such that $V_\pi = \kb{v_\pi}{e_\pi}$, and we have
\begin{equation*}
\ip{e_\rho}{\E(X) e_\pi} = \ip{v_\rho}{v_\pi} \delta_{\rho H^\perp}(\pi H^\perp) \left( \f_{\Omega} \chi_X \right) \left( \rho^{-1}\pi \right).
\end{equation*}
In particular, if $H=\{ e \}$ the last equation is
\begin{equation*}
\ip{e_\rho}{\E(X) e_\pi} = \ip{v_\rho}{v_\pi} \left( \f_{\Omega} \chi_X \right) \left( \rho^{-1}\pi \right).
\end{equation*}
\end{example}

\subsection{Second characterization}\label{Sec. Second characterization}

We proceed to give an alternative characterization of covariant POVMs. Suppose $E$ and $\{V_\pi\}_{\pi\in\hat{G}}$ are as in Theorem~\ref{La proposizione}. Let $V = \oplus_{\pi} I_{\hh_\pi} \otimes V_\pi$.
For $\pi,\rho\in\hat{G}$, we introduce the following sesquilinear form $\Pi[\rho,\pi]$ on $\hh$:
\begin{equation*}
\Pi[\rho,\pi] \lft w , v \rgt = \sqrt{d_\pi d_\rho} \int_H \ip{\contrbis{\rho}{ VU(h) w}}{\contrbis{\pi}{ VU(h) v}} \de \mu_H (h).
\end{equation*}
Clearly, for all $v_\pi \in \hh_\pi\otimes\kk_\pi$, $w_\rho \in \hh_\rho\otimes\kk_\rho$,
\begin{equation*}
\Pi[\rho,\pi] \lft w_\rho , v_\pi \rgt = \sqrt{d_\pi d_\rho} \int_H \ip{\contrbis{\rho}{ (\rho (h) \otimes V_\rho ) w_\rho}}{\contrbis{\pi}{ (\pi (h) \otimes V_\pi) v_\pi}} \de \mu_H (h).
\end{equation*}
If $v_\sigma \in \hh_\sigma \otimes \kk_\sigma$, $w_{\sigma^\prime} \in \hh_{\sigma^\prime} \otimes \kk_{\sigma^\prime}$, and $\sigma \neq \pi$ or $\sigma^\prime \neq \rho$, then
$\Pi[\rho,\pi] (w_{\sigma^\prime} , v_\sigma) = 0$.

Equation~(\ref{La POVM}) can be rewritten as
\begin{eqnarray*}
\ip{w_\rho}{\E(X) v_\pi} & = & \sqrt{d_\pi d_\rho} \int_X \de \mu_\Omega (\dot{g}) \int_H  \left\langle\contrbis{\rho}{ ( \rho(h^{-1}) \rho(g^{-1}) \otimes V_\rho) w_\rho}\right| \\
&&\left|\contrbis{\pi}{ ( \pi(h^{-1}) \pi(g^{-1}) \otimes V_\pi) v_\pi}\right\rangle \de \mu_H (h) \\
& = & \sqrt{d_\pi d_\rho} \int_X \de \mu_\Omega (\dot{g}) \int_H  \left\langle\contrbis{\rho}{ (\rho(h^{-1}) \otimes V_\rho ) (\rho(g^{-1}) \otimes I_{\kk_\rho} ) w_\rho}\right| \\
&&\left|\contrbis{\pi}{ (\pi(h^{-1}) \otimes V_\pi ) (\pi(g^{-1}) \otimes I_{\kk_\pi} ) v_\pi}\right\rangle \de \mu_H (h) \\
& = & \int_X \Pi[\rho,\pi] \lft U(g)^{-1} w_\rho , U(g)^{-1} v_\pi \rgt \de\mu_\Omega (\dot{g}),
\end{eqnarray*}
where we have used the measure decomposition (\ref{Mackey-Bruhat}) and invariance of $\mu_H$.

The above defined sesquilinear form $\Pi[\rho,\pi]$ has the following properties:
\begin{enumerate}
\item[{\rm (i)}] $\Pi[\rho,\pi]$ is a bounded sesquilinear form on $\hh$.
\item[{\rm (ii)}] By invariance of $\mu_H$, we have
\begin{equation*}
\Pi[\rho,\pi] \lft U(h) w , U(h) v \rgt = \Pi[\rho,\pi] \lft w , v \rgt \quad \forall h\in H
\end{equation*}
for all $v,w\in\hh$.
\item[{\rm (iii)}] If $\{ v^\pi \}_{\pi\in\hat{G}}$ is a sequence of vectors in $\hh$ such that $v^\pi \neq 0$ only for a finite number of $\pi$'s, then
\begin{equation*}
\sum\nolimits_{\rho,\pi} \Pi[\rho,\pi] ( v^\rho , v^\pi ) \geq 0.
\end{equation*}
\item[{\rm (iv)}] For $v_\rho , w_\rho \in \hh_\rho \otimes \kk_\rho$
\begin{eqnarray}
&&\int_G \Pi[\pi,\pi] \lft U(g) w_\rho , U(g) v_\rho \rgt \de\mu_G (g) \notag \\
&&\qquad = \delta_{\pi \rho} \int_\Omega \Pi[\rho,\rho] \lft U(g)^{-1} w_\rho , U(g)^{-1} v_\rho \rgt \de\mu_\Omega (\dot{g}) \notag \\
&&\qquad = \delta_{\pi \rho} \ip{w_\rho}{\E (\Omega) v_\rho} \notag \\
&&\qquad = \delta_{\pi \rho} \ip{w_\rho}{v_\rho} \label{normal. di Pi} .
\end{eqnarray}
\end{enumerate}

\begin{remark}
If $\{ \Pi[\pi,\rho] \}_{\pi,\rho \in\hat{G}}$ is a family of sesquilinear forms on $\hh$ satisfyng conditions {\rm (i)}, {\rm (iii)} and {\rm (iv)} above, then
\begin{equation}
\Pi[\rho,\pi] (w_{\sigma^\prime} , v_\sigma) = 0 \quad \textrm{if } v_\sigma \in \hh_\sigma \otimes \kk_\sigma ,\, w_{\sigma^\prime} \in \hh_{\sigma^\prime} \otimes \kk_{\sigma^\prime} , \textrm{ and } \sigma\neq\pi \textrm{ or } \sigma^\prime \neq \rho. \label{annullamento di Pi}
\end{equation}
In fact, by {\rm (iii)} and Cauchy-Schwarz inequality
$$
| \Pi[\rho,\pi] (w_{\sigma^\prime} , v_\sigma) | \leq \Pi[\rho,\rho] (w_{\sigma^\prime} , w_{\sigma^\prime})^{1/2} \Pi[\pi,\pi] (v_\sigma , v_\sigma)^{1/2}.
$$
If $\sigma \neq \pi$ then, by eq.~(\ref{normal. di Pi}),
$$
\int_G \Pi[\pi,\pi] (U(g) v_\sigma , U(g) v_\sigma) \de\mu_G (g) = 0 ,
$$
which implies $\Pi[\pi,\pi] ( v_\sigma , v_\sigma) = 0$ since $\Pi[\pi,\pi]$ is positive semidefinite and the integrand is a continuous function. A similar discussion holds for the other factor.
\end{remark}

\begin{proposition}\label{La proposizione col kernel}
Suppose $\{ \Pi[\rho,\pi] \}_{\rho,\pi \in\hat{G}}$ is a set of sesquilinear forms on $\hh$ satisfyng conditions {\rm (i)}-{\rm (iv)} above. Then there is a $U$-covariant POVM $\E$ on $\Omega$ such that
\begin{equation}
\ip{w_\rho}{\E(X) v_\pi} = \int_X \Pi[\rho,\pi] \lft U(g)^{-1} w_\rho , U(g)^{-1} v_\pi \rgt \de\mu_\Omega (\dot{g})
\label{La POVM col kernel}
\end{equation}
for $v_\pi \in \hh_\pi\otimes\kk_\pi$, $w_\rho \in \hh_\rho\otimes\kk_\rho$.

Conversely, if $\E$ is a $U$-covariant POVM on $\Omega$, then there exist sesquilinear forms $\{ \Pi[\rho,\pi] \}_{\rho,\pi \in\hat{G}}$ on $\hh$ satisfyng conditions {\rm (i)}-{\rm (iv)} and such that eq.~(\ref{La POVM col kernel}) holds. The family of sesquilinear forms $\{ \Pi[\rho,\pi] \}_{\rho,\pi \in\hat{G}}$ is uniquely determined by $\E$.
\end{proposition}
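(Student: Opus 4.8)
The converse half of the statement --- existence of forms satisfying {\rm (i)}--{\rm (iv)} and (\ref{La POVM col kernel}) for a given POVM --- has essentially already been carried out in the discussion preceding the Proposition: starting from the isometries $V_\pi$ furnished by Theorem~\ref{La proposizione}, one puts $V=\oplus_\pi I_{\hh_\pi}\otimes V_\pi$ and defines $\Pi[\rho,\pi]$ by the integral over $H$, and the computation given there yields both (\ref{La POVM col kernel}) and properties {\rm (i)}--{\rm (iv)}. So the plan is to prove only the direct implication and the uniqueness clause. For the direct implication I would define, for finitely supported $w=\sum_\rho w_\rho$ and $v=\sum_\pi v_\pi$ in $\hh$ (with $w_\rho\in\hh_\rho\otimes\kk_\rho$, $v_\pi\in\hh_\pi\otimes\kk_\pi$), the sesquilinear form
\[
b_X(w,v)=\sum_{\rho,\pi}\int_X\Pi[\rho,\pi]\lft U(g)^{-1}w_\rho,U(g)^{-1}v_\pi\rgt\de\mu_\Omega(\dot g),
\]
and show that it comes from a bounded operator $\E(X)$ assembling into the desired POVM. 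First one checks the integrand descends to a continuous function on $\Omega$: since $U$ preserves every isotypic block $\hh_\pi\otimes\kk_\pi$, property {\rm (ii)} makes $g\mapsto\Pi[\rho,\pi](U(g)^{-1}w_\rho,U(g)^{-1}v_\pi)$ constant on right $H$-cosets, hence a genuine function of $\dot g$, continuous by strong continuity of $U$ and boundedness ({\rm (i)}).

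Positivity and Hermiticity come from {\rm (iii)}: applied to the finitely supported sequence $\{U(g)^{-1}v_\pi\}_\pi$ (legitimate since $U(g)^{-1}v_\pi$ stays in $\hh_\pi\otimes\kk_\pi$) it gives $\sum_{\rho,\pi}\Pi[\rho,\pi](U(g)^{-1}v_\rho,U(g)^{-1}v_\pi)\ge0$ pointwise, so $b_X(v,v)\ge0$, and the same positivity forces the form $\sum_{\rho,\pi}\Pi[\rho,\pi](\cdot,\cdot)$ to be Hermitian, whence $b_X$ is Hermitian. The decisive step is the normalization $\E(\Omega)=I$, i.e.\ $b_\Omega(w,v)=\ip{w}{v}=\sum_\pi\ip{w_\pi}{v_\pi}$. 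The diagonal terms $\rho=\pi$ are delivered by {\rm (iv)} (its $\Omega$-integral form, equivalent to the $G$-integral one via (\ref{Mackey-Bruhat}) and {\rm (ii)}). The off-diagonal terms $\rho\neq\pi$ are where I expect the main difficulty: writing $\Pi[\rho,\pi](a,b)=\ip{a}{B_{\rho\pi}b}$ for a bounded $B_{\rho\pi}:\hh_\pi\otimes\kk_\pi\frecc\hh_\rho\otimes\kk_\rho$ (available by {\rm (i)}), one has
\[
\int_\Omega\Pi[\rho,\pi]\lft U(g)^{-1}w_\rho,U(g)^{-1}v_\pi\rgt\de\mu_\Omega(\dot g)=\ip{w_\rho}{\Big(\int_G(\rho(g)\otimes I_{\kk_\rho})\,B_{\rho\pi}\,(\pi(g)^{-1}\otimes I_{\kk_\pi})\de\mu_G(g)\Big)v_\pi},
\]
and the bracketed operator intertwines $\pi\otimes I_{\kk_\pi}$ with $\rho\otimes I_{\kk_\rho}$; since $\pi$ and $\rho$ are inequivalent, Schur's lemma forces it to vanish. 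Hence $b_\Omega(w,v)=\ip{w}{v}$, so $0\le b_X(v,v)\le b_\Omega(v,v)=\no{v}^2$ (the gap being $b_{\Omega\setminus X}\ge0$); thus $b_X$ extends to a bounded positive form on $\hh$ defining $\E(X)$ with $O\le\E(X)\le I$. Countable additivity in the weak operator topology follows from $\sigma$-additivity of $X\mapsto\int_X(\cdots)\de\mu_\Omega$, and $U$-covariance from the substitution $g\mapsto ag$ with $G$-invariance of $\mu_\Omega$; equation~(\ref{La POVM col kernel}) then holds by construction.

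For uniqueness, suppose $\{\Pi[\rho,\pi]\}$ and $\{\Pi'[\rho,\pi]\}$ both satisfy {\rm (i)}--{\rm (iv)} and give the same $\E$ via (\ref{La POVM col kernel}). Then for every Borel $X$ the integrals of $\Pi[\rho,\pi](U(g)^{-1}w_\rho,U(g)^{-1}v_\pi)$ and of the primed form over $X$ coincide, so the two integrands agree $\mu_\Omega$-almost everywhere; being continuous on $\Omega$ and $\mu_\Omega$ having full support, they agree everywhere, and evaluating at $\dot g=\dot e$ (where $U(g)^{-1}=I$) yields $\Pi[\rho,\pi](w_\rho,v_\pi)=\Pi'[\rho,\pi](w_\rho,v_\pi)$ for all $w_\rho\in\hh_\rho\otimes\kk_\rho$, $v_\pi\in\hh_\pi\otimes\kk_\pi$. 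Since both families vanish off these blocks by (\ref{annullamento di Pi}), the forms coincide on all of $\hh$. The only genuinely delicate point is the off-diagonal normalization through Schur orthogonality; once the forms are seen to descend to continuous functions on $\Omega$, the remaining verifications are routine.
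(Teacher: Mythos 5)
Your proposal is correct and follows essentially the same route as the paper: the direct implication is established by defining the form $b_X$ on the algebraic sum of the isotypic blocks, using property (iii) for positivity, Schur's lemma to kill the off-diagonal terms in the normalization $b_\Omega(w,v)=\ip{w}{v}$, and property (iv) for the diagonal ones, after which boundedness gives $\E(X)$; the uniqueness clause is proved by the same continuity-plus-full-support density argument, and the converse is correctly delegated to the computation preceding the Proposition. Your treatment is, if anything, slightly more explicit than the paper's on the descent of the integrand to $\Omega$ and on the intertwining operator behind the Schur step.
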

\begin{proof}
For uniqueness, we see from eq.~(\ref{La POVM col kernel}) that $m_{w_\rho , v_\pi} (\dot{g}) := \Pi[\rho,\pi] \lft U(g) w_\rho , U(g) v_\pi \rgt$ is the density of the complex measure $X\mapsto \ip{w_\rho}{\E(X) v_\pi}$ with respect to $\mu_\Omega$. Since $m_{w_\rho , v_\pi}$ is a continuous function and the support of $\mu_\Omega$ is the whole $\Omega$, $m_{w_\rho , v_\pi}$ is unique. Hence, $\Pi[\rho,\pi](\cdot , \cdot) = m_{\cdot , \cdot} (e)$ is unique.

It remains to check that, given $\{ \Pi[\pi,\rho] \}_{\pi,\rho \in\hat{G}}$ as above, eq.~(\ref{La POVM col kernel}) defines a POVM $\E$. Let $\hh_0 = \spanno{v_\pi \mid v_\pi \in \hh_\pi \otimes \kk_\pi ,\, \pi\in\hat{G}}$. For $X\in\boro$, we introduce in $\hh_0$ the sesquilinear form
$$
\ip{w}{v}_X := \sum_{\pi,\rho} \int_X \Pi[\rho,\pi] (U(g)^{-1} w , U(g)^{-1} v ) \de\mu_\Omega (\dot{g}).
$$
(By eq.~(\ref{annullamento di Pi}), the sum involves only a finite number of $\pi,\rho\in\hat{G}$).
By {\rm (iii)}, $\ip{\cdot}{\cdot}_X$ is positive semidefinite. If $\{ X_i \}_{i\in\N}$ is a sequence of disjoint sets, then $\ip{w}{v}_{\cup_i X_i} = \sum_i \ip{w}{v}_{X_i}$, the sum converging absolutely. Finally, for $v = \sum_{\pi\in J} v_\pi$, with $v_\pi \in \hh_\pi \otimes \kk_\pi$ and $J\subseteq \hat{G}$ finite,
\begin{eqnarray*}
\ip{v}{v}_\Omega & = & \sum_{\pi,\rho \in J} \int_G \Pi[\rho,\pi] (U(g)^{-1} v_\rho , U(g)^{-1} v_\pi ) \de\mu_G (g) \\
\textrm{\emph{by Schur lemma}} & = & \sum_{\pi\in J} \int_G \Pi[\pi,\pi] (U(g)^{-1} v_\pi , U(g)^{-1} v_\pi ) \de\mu_G (g) \\
\textrm{\emph{by} {\rm (iv)}} & = & \sum_{\pi\in J} \ip{v_\pi}{v_\pi} \\
& = & \no{v_\pi}^2,
\end{eqnarray*}
i.e.\ $\ip{\cdot}{\cdot}_\Omega = \ip{\cdot}{\cdot}$ on $\hh_0$. Since $\ip{v}{v}_X \leq \ip{v}{v}_\Omega$, $\ip{\cdot}{\cdot}_X$ extends to a bounded sesquilinear form on $\hh$. Thus, $\ip{\cdot}{\cdot}_X = \ip{\E(X) \cdot}{\cdot}$ for some $\E(X) \in \elle{\hh}$. Positivity, $\sigma$-additivity and normalisation of $\E$ follow from the similar properties of the family of sesquilinear forms $\{ \ip{\cdot}{\cdot}_{X} \}_{X\in\boro}$.
\end{proof}

From Theorem~\ref{La proposizione} and Proposition~\ref{La proposizione col kernel} we obtain the following.
\begin{corollary}\label{forma di Pi}
If $V_\pi : \kk_\pi \frecc \hh_\pi^\ast \otimes \kk$ ($\pi\in\hat{G}$) is a family of isometries, and $V = \oplus_\pi I_{\hh_\pi} \otimes V_\pi$, then
$$
\Pi[\rho,\pi] \lft w , v \rgt = \sqrt{d_\pi d_\rho} \int_H \ip{\contrbis{\rho}{ VU(h) w}}{\contrbis{\pi}{ VU(h) v}} \de \mu_H (h)
$$
defines a family of sesquilinear forms $\{ \Pi[\rho,\pi] \}_{\rho,\pi \in\hat{G}}$ on $\hh$ satisfyng conditions {\rm (i)}-{\rm (iv)}, and every such family arises in this manner for some choice of the isometries $\{ V_\pi \}_{\pi \in\hat{G}}$.
\end{corollary}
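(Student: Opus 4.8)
The plan is to glue together the two parametrizations of $U$-covariant POVMs just obtained—one by families of isometries (Theorem~\ref{La proposizione}) and one by families of sesquilinear forms satisfying (i)-(iv) (Proposition~\ref{La proposizione col kernel})—using the uniqueness clause of the latter as the connecting link.

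I would first dispose of the forward direction, for which essentially no new work is needed. For isometries $\{V_\pi\}$ with $V = \oplus_\pi I_{\hh_\pi} \otimes V_\pi$, the displayed formula is exactly the sesquilinear form $\Pi[\rho,\pi]$ introduced at the opening of Section~\ref{Sec. Second characterization}, and the verification that it meets conditions (i)-(iv) is precisely the list recorded there before the Remark. The computation following that definition also shows that this $\Pi$ reproduces, through~(\ref{La POVM col kernel}), the very POVM $\E$ that the isometries $\{V_\pi\}$ yield via Theorem~\ref{La proposizione}. So I would simply cite these facts.

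For the converse I would start from an arbitrary family $\{\Pi[\rho,\pi]\}$ satisfying (i)-(iv). By Proposition~\ref{La proposizione col kernel} it defines a $U$-covariant POVM $\E$ via~(\ref{La POVM col kernel}); by Theorem~\ref{La proposizione} this $\E$ is generated by some family of isometries $\{V_\pi : \kk_\pi \frecc \hh_\pi^\ast \otimes \kk\}$. Feeding these isometries into the formula of the corollary produces a family $\{\Pi^\prime[\rho,\pi]\}$ which, by the forward direction, again satisfies (i)-(iv) and reproduces the same $\E$ through~(\ref{La POVM col kernel}). I would then invoke the uniqueness part of Proposition~\ref{La proposizione col kernel} to conclude $\Pi^\prime[\rho,\pi] = \Pi[\rho,\pi]$, so that the original family indeed comes from $\{V_\pi\}$.

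The argument is short, so there is no serious obstacle; the one step deserving care—and the crux of the whole proof—is the appeal to uniqueness. It is essential that Proposition~\ref{La proposizione col kernel} guarantees uniqueness among all families satisfying (i)-(iv) and reproducing $\E$, since this is exactly what upgrades ``$\Pi$ and $\Pi^\prime$ give the same POVM'' to ``$\Pi = \Pi^\prime$''. Everything else is formal bookkeeping.
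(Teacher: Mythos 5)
Your argument is correct and is exactly the derivation the paper intends: the corollary is stated there with only the remark that it follows from Theorem~\ref{La proposizione} and Proposition~\ref{La proposizione col kernel}, and your proposal fills in that deduction in the natural way, with the forward direction read off from the computations opening Section~\ref{Sec. Second characterization} and the converse obtained by composing the two parametrizations and invoking the uniqueness clause of Proposition~\ref{La proposizione col kernel}. You correctly identify that uniqueness clause as the step that upgrades ``same POVM'' to ``same family of forms.''
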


\begin{remark}
The characterisation of the most general $U$-covariant POVM given in Proposition~\ref{La proposizione col kernel} is already contained in \cite{Kholevo87}. However, it can happen that the sesquilinear forms $\Pi [\rho , \pi]$ are not bounded uniformly in $\rho,\pi \in \hat{G}$, thus contradicting a part of the statement in Theorem 1 of \cite{Kholevo87} as shown in the following example.

Suppose $G = \rm SU(2)$, so that $\{d_\pi\}_{\pi\in\hat{G}} = \Z_+$ is an unbounded set. Let $H = \{1\}$, and fix $U$ such that $\kk_\pi = \hh_\pi^\ast$ for all $\pi\in\hat{G}$, so that $\hh = \oplus_\pi \hh_\pi \otimes \hh_\pi^\ast$. Choose a vector $k\in\kk$ with $\no{k} = 1$, and for all $\pi$ let $V_\pi : \hh_\pi^\ast \frecc \hh_\pi^\ast \otimes \kk$ be the following isometry
$$
V_\pi h^\ast_\pi = h^\ast_\pi \otimes k \quad \forall h^\ast_\pi \in \hh_\pi^\ast.
$$
Let $v_\pi = d_\pi^{-1/2} I_{\hh_\pi} \in \hh_\pi \otimes \hh_\pi^\ast$. Then $\no{v_\pi}_\hh = 1$, and
$\contrbis{\pi}{ (I_{\hh_\pi} \otimes V_\pi) v_\pi } = d_\pi^{1/2} k$.
It follows that
\begin{equation*}
\Pi[\rho,\pi] \lft v_\rho , v_\pi \rgt = \sqrt{d_\pi d_\rho} \ip{\contrbis{\rho}{ (I_{\hh_\rho} \otimes V_\rho) v_\rho }}{\contrbis{\pi}{ (I_{\hh_\pi} \otimes V_\pi) v_\pi }}
= d_\pi d_\rho ,
\end{equation*}
thus showing that $\Pi$ is not bounded uniformly in $\rho,\pi\in\hat{G}$. Moreover, if, following the notations of \cite{Kholevo87}, we introduce the linear subspace
$$
\hh^1 = \left\{ \sum_{\pi\in J} v_\pi \mid v_\pi \in \hh_\pi \otimes \kk_\pi ,\,  \sum_{\pi\in J} \no{v_\pi}^p < \infty \textrm{ for } p=1,2 \right\} \subset \hh ,
$$
it is easy to check that, for such $\Pi$, also the corresponding sesquilinear form $\Pi : \hh^1 \times \hh^1 \frecc \C$ is not defined.
\end{remark}

We define $K(\rho,\pi) : \hh \frecc \hh$ by the relation
$$
\ip{w}{K(\rho,\pi) v} = \Pi[\rho,\pi] (w,v) \quad \forall w,v \in\hh .
$$
The family of sesquilinear forms $\{ \Pi[\rho,\pi] \}_{\rho,\pi\in\hat{G}} \}$ is thus in one-to-one correspondence with the family of operators $\{ K(\rho,\pi) \}_{\rho,\pi\in\hat{G}} \}$ on $\hh$ with the following properties:

\begin{enumerate}
\item[{\rm (i')}] $K(\rho,\pi)$ are bounded operators on $\hh$.
\item[{\rm (ii')}] For all $h\in H$
\begin{equation*}
K(\rho,\pi) U(h) = U(h) K(\rho,\pi) .
\end{equation*}
\item[{\rm (iii')}] $K : \hat{G} \times \hat{G} \frecc \elle{\hh}$ is a kernel of positive type, i.e., if $\{ v^\pi \}_{\pi\in\hat{G}}$ is a sequence of vectors in $\hh$ such that $v^\pi \neq 0$ only for a finite number of $\pi$'s, then
\begin{equation*}
\sum\nolimits_{\rho,\pi}
\ip{v^\rho}{K(\rho,\pi) v^\pi}
\geq 0.
\end{equation*}
\item[{\rm (iv')}]
\begin{equation}\label{traccia di K}
\traccabis{\rho}{K(\pi,\pi)} = \delta_{\rho \pi} d_\rho I_{\kk_\rho}.
\end{equation}
In fact, by eq.~(\ref{annullamento di Pi}), $(\ker K (\pi , \pi) )^\perp = \ran K (\pi , \pi) \subseteq \hh_\pi\otimes\kk_\pi$. In particular, $\traccabis{\rho}{K(\pi,\pi)} = 0$ if $\pi\neq\rho$. By Schur lemma
\begin{equation*}
\int_G U(g)^{-1} K(\pi,\pi) U(g) \de\mu_G (g) = I_{\hh_\pi} \otimes A ,
\end{equation*}
with $A\in\elle{\kk_\pi}$. If $T_\pi\in\ti (\kk_\pi )$, then
\begin{eqnarray*}
\tr{AT_\pi} & = & d_\pi^{-1} \tr{(I_{\hh_\pi} \otimes A) (I_{\hh_\pi} \otimes T_\pi)} \\
& = & d_\pi^{-1} \int_G \tr{(\pi(g)^{-1} \otimes I_{\kk_\pi} ) K(\pi,\pi) (\pi(g) \otimes I_{\kk_\pi} ) (I_{\hh_\pi} \otimes T_\pi)} \de\mu_G (g) \\
\textrm{\emph{cyclicity of} ${\rm tr}$} & = & d_\pi^{-1} \tr{K(\pi,\pi) (I_{\hh_\pi} \otimes T_\pi)} = d_\pi^{-1} \tr{ \traccabis{\pi}{K(\pi,\pi)} T_\pi } ,
\end{eqnarray*}
i.e.~$A = d_\pi^{-1} \traccabis{\pi}{K(\pi,\pi)}$. Equation (\ref{traccia di K}) then follows from (\ref{normal. di Pi}).
\end{enumerate}

We denote by $\cc$ the convex set of maps $K : \hat{G} \times \hat{G} \frecc \elle{\hh}$ satisfying conditions {\rm (i')}-{\rm (iv')}. The following simple remarks will be useful later.

\begin{remark}\label{remark:k-in-c}
By eq.~(\ref{annullamento di Pi}), if $K\in\cc$, then $(\ker K(\rho , \pi) )^\perp \subseteq \hh_\pi \otimes \kk_\pi$ and $\ran K(\rho , \pi) \subseteq \hh_\rho \otimes \kk_\rho$.
\end{remark}

\begin{remark}\label{remark:k-dd}
By Corollary \ref{forma di Pi}, the kernels $K\in\cc$ are the functions $K:\hat{G}\times\hat{G} \frecc \lh$ given by
$$
\ip{w}{K(\rho,\pi)v} = \sqrt{d_\pi d_\rho} \int_H \ip{\contrbis{\rho}{ VU(h) w}}{\contrbis{\pi}{ VU(h) v}} \de \mu_H (h)
$$
for $V = \oplus_\pi I_{\hh_\pi} \otimes V_\pi$ with $V_\pi : \kk_\pi \frecc \hh_\pi^\ast \otimes \kk$ isometries. In particular
\begin{equation}\label{magg. di K}
\no{K(\rho,\pi)} \leq \sqrt{d_\pi d_\rho} \no{{\rm ctr}_{\hh_\rho}} \no{{\rm ctr}_{\hh_\pi}} = d_\pi d_\rho .
\end{equation}
\end{remark}

Proposition \ref{La proposizione col kernel} can be restated in the following form.
\begin{corollary}\label{cor. sull'isom. delle POVM con cc}
There is a convex isomorphism between $\cc$ and the set of $U$-covariant POVMs on $\Omega$. If $K\in\cc$, the corresponding POVM $\E$ is given by
\begin{equation*}
\ip{w_\rho}{\E(X) v_\pi} = \int_X \ip{U(g)^* w_\rho}{ K(\rho,\pi) U(g)^* v_\pi } \de\mu_\Omega (\dot{g})
\end{equation*}
for all $v_\pi \in\hh_\pi \otimes \kk_\pi$, $w_\rho \in\hh_\rho \otimes \kk_\rho$.
\end{corollary}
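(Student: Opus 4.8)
The plan is to observe that this corollary simply repackages Proposition~\ref{La proposizione col kernel} through the dictionary between bounded sesquilinear forms and bounded operators that was already set up when the family $\{K(\rho,\pi)\}$ was introduced. First I would recall that the Riesz assignment $\Pi \mapsto K$ determined by $\ip{w}{K(\rho,\pi)v}=\Pi[\rho,\pi](w,v)$ is a linear bijection between families of bounded sesquilinear forms on $\hh$ and families of bounded operators on $\hh$, and then check that under this bijection the four defining conditions correspond term by term. Condition (i)$\Leftrightarrow$(i') is the standard equivalence of boundedness for a form and its operator. For (ii)$\Leftrightarrow$(ii'), I would write $\Pi[\rho,\pi](U(h)w,U(h)v)=\ip{w}{U(h)^\ast K(\rho,\pi)U(h)v}$, so that $H$-invariance of the form is the same as $U(h)^\ast K(\rho,\pi)U(h)=K(\rho,\pi)$, i.e.\ $K(\rho,\pi)U(h)=U(h)K(\rho,\pi)$. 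Condition (iii)$\Leftrightarrow$(iii') is immediate from the defining relation, and (iv)$\Leftrightarrow$(iv') is exactly the computation already carried out in the text preceding the definition of $\cc$. Hence the families $\{\Pi[\rho,\pi]\}$ satisfying (i)--(iv) correspond linearly and bijectively to the elements of $\cc$.

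Next I would invoke Proposition~\ref{La proposizione col kernel}, which gives a bijection between those families $\{\Pi[\rho,\pi]\}$ and the $U$-covariant POVMs on $\Omega$ via eq.~(\ref{La POVM col kernel}). Since that formula is linear in $\Pi$, this bijection is affine, i.e.\ it preserves convex combinations. Composing it with the linear bijection $\Pi \mapsto K$ of the previous step yields an affine bijection between $\cc$ and the set of $U$-covariant POVMs, which is precisely the asserted convex isomorphism.

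Finally, to obtain the explicit formula I would substitute $\Pi[\rho,\pi](w,v)=\ip{w}{K(\rho,\pi)v}$ into eq.~(\ref{La POVM col kernel}) and use that $U$ is unitary, so $U(g)^{-1}=U(g)^\ast$; this turns $\Pi[\rho,\pi](U(g)^{-1}w_\rho,U(g)^{-1}v_\pi)$ into $\ip{U(g)^\ast w_\rho}{K(\rho,\pi)U(g)^\ast v_\pi}$ and reproduces the displayed integral. No genuine obstacle arises, since all the analytic content---convergence, positivity, $\sigma$-additivity, and normalisation of $\E$---was already settled in Proposition~\ref{La proposizione col kernel}; the only point requiring care is the term-by-term matching of conditions (i)--(iv) on $\Pi$ with (i')--(iv') on $K$, and even the least transparent of these, the normalisation (iv)$\Leftrightarrow$(iv'), has already been dispatched in the excerpt.
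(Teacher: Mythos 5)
Your proposal is correct and follows exactly the route the paper intends: the corollary is a restatement of Proposition~\ref{La proposizione col kernel} through the Riesz correspondence $\Pi[\rho,\pi](w,v)=\ip{w}{K(\rho,\pi)v}$, whose term-by-term matching of (i)--(iv) with (i')--(iv') the paper has already carried out in the text defining $\cc$. The affinity of the resulting bijection and the rewriting $U(g)^{-1}=U(g)^\ast$ are exactly the remaining (trivial) observations, so nothing is missing.
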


\section{Extremal covariant POVMs}\label{Extremal}

\subsection{Topology on POVMs}

The convex set of POVMs $\E : \boro \frecc \lh$ has a natural compact topology, under which $U$-covariant POVMs form a closed convex subset. Krein-Millman theorem then asserts that every $U$-covariant POVM can be approximated by a convex sum of extremal $U$-covariant POVMs. Moreover, if the convex set $\cc$ introduced in the last section is endowed with the topology of pointwise ultraweak convergence, the isomorphism estabilished in Corollary \ref{cor. sull'isom. delle POVM con cc} actually becomes a homeomorphism.

The rest of this section is devoted to the definition and properties of the topologies on covariant POVMs and on $\cc$. In the next section, we will characterize the extreme points of the set of $U$-covariant POVMs.

Let $M(\Omega)$ be the partially ordered Banach space of the complex Borel measures on $\Omega$ with the norm of the total variation. We will consider an element $\mu\in M(\Omega)$ both as a map from $\boro$ to $\C$ and as a bounded linear functional on the Banach space $C(\Omega)$ (the space of continuous functions $f:\Omega\frecc\C$ with the norm $\no{f} = \max_{x\in\Omega} |f(x)|$).

Let us denote by $\elle{\trh ; M(\Omega)}$ the Banach space of the bounded linear maps from $\trh$ into $M(\Omega)$. The space $\elle{\trh ; M(\Omega)}$ has a natural ordering arising from the definition that a map $\mm : \trh \frecc M(\Omega)$ is positive if $\mm (T) \geq 0$ for all $T\geq 0$.

\begin{definition}\label{def. di POVM 2}
An {\em operator valued measure} (OVM) based on $\Omega$ is any element $\mm\in\elle{\trh ; M(\Omega)}$. We use the abbreviated notation $M (\Omega ; \hh) := \elle{\trh ; M(\Omega)}$.

A {\em normalized positive operator valued measure} (POVM) based on $\Omega$ is any positive element $\ee\in M (\Omega ; \hh)$ such that $\ee [T] (\Omega) = \tr{T}$ (equivalently, $\ee [T] ( 1 ) = \tr{T}$) for all $T\in\trh$. We denote by $P (\Omega ; \hh)$ the set of POVMs on $\Omega$.

A {\em $U$-covariant POVM} is an element $\ee\in P (\Omega ; \hh)$ such that $\ee [U(g) T U(g)^{-1}] (X) = \ee [T] (g^{-1} X)$ for all $X\in\boro$ (equivalently, $\ee [U(g) T U(g)^{-1}] (f) = \ee [T] (f^g)$ for all $f\in C(\Omega)$, where $f^g (x) = f (gx)$) and for all $T\in\trh$, $g\in G$. We denote by $P_U (\Omega ; \hh)$ the set of $U$-covariant POVMs on $\Omega$.
\end{definition}

The above definition of POVM agrees with Definition \ref{def. di POVM 1} in the following sense. If $\mm$ is an OVM, for all $X\in \boro$ it defines an element $\M (X) \in \lh$ such that $\tr{T \M (X)} = \mm [T] (X)$ $\forall T\in\trh$. POVMs in the sense of Definition \ref{def. di POVM 2} are then all the OVMs $\mm$ such that $\M : \boro \frecc \lh$ is a POVM in the sense of Definition \ref{def. di POVM 1}. Moreover, $\mm$ is $U$-covariant if and only if $\M$ is $U$-covariant.

In $M (\Omega ; \hh)$ we define the following family of seminorms labeled by $f\in C(\Omega)$, $T\in\trh$
$$
\no{\mm}_{T;f} = | \mm [T] (f) |.
$$
Let $M_w (\Omega ; \hh)$ be the set $M (\Omega ; \hh)$ endowed with the locally convex topology generated by these seminorms. It is easy to see that $P (\Omega ; \hh)$ and $P_U (\Omega ; \hh)$ are closed convex subsets of $M_w (\Omega ; \hh)$. Moreover, $P (\Omega ; \hh)$ is norm bounded. In fact, let $\ee$ be a POVM, and, for $T\in\trh$, decompose $T = T_{11} - T_{12} + i ( T_{21} - T_{22} )$, with $T_{jk} \geq 0$ and $\notr{T_{j1}} + \notr{T_{j2}} \leq \notr{T}$. We have
$$
\no{\ee [T]} \leq \sum_{j,k} \no{\ee [T_{jk}]} = \sum_{j,k} \ee [T_{jk}] (\Omega) = \sum_{j,k} \notr{T_{jk}} \leq 2 \notr{T} ,
$$
thus showing $\no{\ee} \leq 2$, as claimed.

The following fact has a routine proof, which we omit (see, for instance, the proof of Theorem~2.5.2 in \cite{AN89}).
\begin{proposition}
The unit ball in $M_w (\Omega ; \hh)$ is compact.
\end{proposition}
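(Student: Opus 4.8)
The plan is to realize the $M_w$ topology as a topology of pointwise convergence and then deduce compactness from Tychonoff's theorem together with a closedness argument. First I would introduce the evaluation map
$$
\Phi : M(\Omega;\hh) \frecc \prod_{(T,f) \in \trh \times C(\Omega)} \C , \qquad \Phi(\mm)_{(T,f)} = \mm [T] (f) ,
$$
where the product carries the product (Tychonoff) topology. Since the seminorms $\no{\cdot}_{T;f}$ defining $M_w (\Omega;\hh)$ are precisely the pullbacks under $\Phi$ of the coordinate seminorms on the product, the map $\Phi$ is a homeomorphism of $M_w (\Omega;\hh)$ onto its image endowed with the subspace topology; injectivity is clear, since $\mm [T] (f) = 0$ for all $f\in C(\Omega)$ forces $\mm [T] = 0$ in $M(\Omega)$. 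It therefore suffices to prove that $\Phi(B_1)$ is compact, where $B_1 = \{ \mm\in M(\Omega;\hh) \mid \no{\mm}\leq 1\}$.

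Next I would confine the image to a compact box. For $\mm\in B_1$ one has $|\mm [T] (f)| \leq \no{\mm}\,\notr{T}\,\no{f} \leq \notr{T}\,\no{f}$, so $\Phi(B_1)$ is contained in
$$
P = \prod_{(T,f)} \left\{ z\in\C \mid |z| \leq \notr{T}\,\no{f} \right\} ,
$$
a product of closed disks, hence compact by Tychonoff's theorem. The statement thus reduces to showing that $\Phi(B_1)$ is closed in $P$.

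The main point, and the step I expect to be the real obstacle, is this closedness, which amounts to reconstructing a genuine OVM from a limit of coordinates. I would take a net $(\mm_\alpha)\subset B_1$ with $\Phi(\mm_\alpha)\to (c(T,f))_{(T,f)}\in P$, that is, $\mm_\alpha [T] (f)\to c(T,f)$ for every pair. Passing to the limit in the defining relations shows that $c$ is separately linear in $T$ and in $f$ and satisfies $|c(T,f)|\leq\notr{T}\,\no{f}$. For each fixed $T$, the map $f\mapsto c(T,f)$ is then a bounded linear functional on $C(\Omega)$ of norm at most $\notr{T}$, so by the Riesz representation theorem (applicable since $\Omega$ is compact Hausdorff) there is a unique $\mm [T] \in M(\Omega)$ with $\mm [T] (f) = c(T,f)$ and $\no{\mm [T]}\leq\notr{T}$. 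Linearity of $c$ in $T$ makes $T\mapsto\mm [T]$ a linear map, and the norm estimate gives $\mm\in\elle{\trh ; M(\Omega)} = M(\Omega;\hh)$ with $\no{\mm}\leq 1$, i.e.\ $\mm\in B_1$. Since $\Phi(\mm) = (c(T,f))$, the limit lies in $\Phi(B_1)$, which is therefore closed.

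Finally, $\Phi(B_1)$ is a closed subset of the compact set $P$, hence compact, and so $B_1$ is compact in $M_w (\Omega;\hh)$. One could alternatively identify $M(\Omega;\hh) = \elle{\trh ; C(\Omega)^\ast}$ with the dual of the projective tensor product $\trh \,\widehat{\otimes}\, C(\Omega)$, observe that on $B_1$ the $M_w$ topology coincides with the weak-$\ast$ topology, and invoke Banach--Alaoglu directly; the self-contained argument above avoids introducing tensor products.
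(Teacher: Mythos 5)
Your proof is correct and is essentially the argument the paper has in mind: the paper omits the proof, pointing to the proof of Alaoglu's theorem (Theorem~2.5.2 in \cite{AN89}) as the model, and your Tychonoff-plus-closedness argument, with the Riesz representation theorem used to reconstruct the limit OVM, is precisely that adaptation. No gaps.
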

By Krein-Millman theorem we then have the following corollary.
\begin{corollary}
$P_U (\Omega ; \hh)$ is the closed convex hull of its extreme points.
\end{corollary}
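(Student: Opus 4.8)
The plan is to deduce this directly from the Krein-Milman theorem, for which I need $P_U (\Omega ; \hh)$ to be a compact convex subset of a locally convex Hausdorff topological vector space. The ambient space $M_w (\Omega ; \hh)$ is locally convex by construction, since its topology is generated by the seminorms $\no{\cdot}_{T;f}$; it is moreover Hausdorff, because if $\no{\mm}_{T;f} = |\mm [T] (f)| = 0$ for all $T\in\trh$ and all $f\in C(\Omega)$, then each functional $\mm [T]$ vanishes on $C(\Omega)$, hence $\mm [T] = 0$ as a measure, hence $\mm = 0$. Convexity of $P_U (\Omega ; \hh)$ has already been observed in the text.

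The crux is compactness. First I would recall that $P (\Omega ; \hh)$—and hence its subset $P_U (\Omega ; \hh)$—is norm bounded by $2$, as shown by the decomposition argument preceding the Proposition. Thus $P_U (\Omega ; \hh)$ is contained in the ball of radius $2$ about the origin in $M (\Omega ; \hh)$. By the preceding Proposition, the unit ball of $M_w (\Omega ; \hh)$ is compact; since scalar multiplication by $2$ is a homeomorphism of $M_w (\Omega ; \hh)$, the ball of radius $2$ is compact as well.

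Next I would invoke the closedness of $P_U (\Omega ; \hh)$ in $M_w (\Omega ; \hh)$, already noted in the text. A closed subset of a compact set is compact, so $P_U (\Omega ; \hh)$ is a compact convex subset of the locally convex Hausdorff space $M_w (\Omega ; \hh)$. The Krein-Milman theorem then yields that $P_U (\Omega ; \hh)$ coincides with the closed convex hull of its extreme points, which is the assertion.

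I do not anticipate a genuine obstacle here: every ingredient—local convexity, Hausdorffness, convexity, the norm bound by $2$, closedness of $P_U (\Omega ; \hh)$, and compactness of the unit ball—is either immediate or already established. The only point that requires a moment's care is packaging compactness of $P_U (\Omega ; \hh)$ itself, that is, combining the norm bound with the compactness of the unit ball from the Proposition rather than applying Krein-Milman to the unit ball directly, so that the theorem is applied to the convex set actually of interest.
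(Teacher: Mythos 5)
Your proof is correct and follows the same route as the paper: the paper deduces the corollary immediately from the compactness of the unit ball in $M_w (\Omega ; \hh)$ together with the previously established norm bound, closedness and convexity of $P_U (\Omega ; \hh)$, via Krein--Milman. You have merely spelled out the details (Hausdorffness, rescaling the unit ball to radius $2$) that the paper leaves implicit.
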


Let $\bb$ be the linear space of functions $F : \hat{G} \times \hat{G} \frecc \lh$ endowed with the topology of pointwise ultraweak convergence. In other words, $\bb$ has the topology generated by the family of seminorms
$$
\no{F}_{T;\rho,\pi} = \left| \tr{TF(\rho,\pi)} \right|, \quad T\in\trh, \, \rho, \pi \in \hat{G}.
$$
It is easy to check that the convex subset $\cc\subseteq \bb$ is closed.
\begin{lemma}\label{lemma sulla conv. in cc}
Suppose $\ti_0(\hh) \subseteq \trh$ is a dense subset. Then, a net $(K_\lambda)_{\lambda\in\Lambda}$ converges to $K$ in $\cc$ if and only if $\no{K_\lambda - K}_{T_0;\rho,\pi} \to 0$ for all $\rho,\pi\in \hat{G}$ and $T_0\in\ti_0 (\hh)$.
\end{lemma}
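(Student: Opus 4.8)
The plan is as follows. The ``only if'' direction is immediate, since the seminorms $\no{\cdot}_{T_0;\rho,\pi}$ with $T_0\in\ti_0(\hh)$ form a subfamily of those generating the topology of $\cc$; I would dispose of it in a single line. All the content lies in the converse implication, and for it I would fix $\rho,\pi\in\hat{G}$ and reduce the claim, for these indices, to proving that $|\tr{T(K_\lambda(\rho,\pi)-K(\rho,\pi))}|\to 0$ for every $T\in\trh$, knowing that this already holds for every $T_0\in\ti_0(\hh)$.

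The crucial ingredient is the uniform bound~(\ref{magg. di K}) of Remark~\ref{remark:k-dd}: since $K_\lambda,K\in\cc$, one has $\no{K_\lambda(\rho,\pi)}\leq d_\pi d_\rho$ and $\no{K(\rho,\pi)}\leq d_\pi d_\rho$, so the operator norm of the difference $D_\lambda := K_\lambda(\rho,\pi)-K(\rho,\pi)$ is bounded by $M := 2d_\pi d_\rho$, \emph{uniformly in $\lambda$}. It is precisely this uniform bound that licenses the passage from a dense subset of $\trh$ to all of $\trh$.

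With this in hand I would run a standard $\eps/2$ density argument, based on the duality inequality $|\tr{SB}|\leq\notr{S}\,\no{B}$ valid for $S\in\trh$ and $B\in\lh$. Given $T\in\trh$ and $\eps>0$, I would choose $T_0\in\ti_0(\hh)$ with $\notr{T-T_0}<\eps/(2M)$; then
$$
|\tr{TD_\lambda}| \leq |\tr{(T-T_0)D_\lambda}| + |\tr{T_0 D_\lambda}| \leq \notr{T-T_0}\,\no{D_\lambda} + |\tr{T_0 D_\lambda}| < \frac{\eps}{2} + |\tr{T_0 D_\lambda}| ,
$$
and the last summand is eventually smaller than $\eps/2$ by hypothesis. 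As $\rho,\pi$ and $\eps$ were arbitrary, this yields $\no{K_\lambda-K}_{T;\rho,\pi}\to 0$ for all $T,\rho,\pi$, i.e.\ convergence in $\cc$.

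The argument presents no genuine obstacle; the only point requiring care --- and the very reason the lemma holds --- is the uniform operator-norm bound on the $K(\rho,\pi)$, without which the trace-norm approximation could not be absorbed into the estimate. I would also note that $M$ depends on $\rho,\pi$, but this is harmless since those indices stay fixed throughout the computation.
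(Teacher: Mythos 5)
Your proposal is correct and follows essentially the same route as the paper: both proofs rest on the uniform bound $\no{K(\rho,\pi)}\leq d_\pi d_\rho$ from Remark~\ref{remark:k-dd} together with the duality estimate $|\tr{SB}|\leq\notr{S}\no{B}$, applied in a standard $\eps/2$ density argument over $\ti_0(\hh)$. The paper merely compresses this into a single two-line chain of inequalities, so there is nothing to add.
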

\begin{proof}
This follows easily by the inequality
\begin{eqnarray*}
\no{K_\lambda - K}_{T;\rho,\pi} & \leq & \no{T-T_0} \no{K_\lambda (\rho,\pi)} + \no{K_\lambda - K}_{T_0;\rho,\pi} +
\no{T-T_0} \no{K(\rho,\pi)} \\
& \leq & 2 d_\rho d_\pi \no{T-T_0} + \no{K_\lambda - K}_{T_0;\rho,\pi}
\end{eqnarray*}
for all $T\in\ti (\hh)$, $T_0\in\ti_0 (\hh)$, where we have used inequality (\ref{magg. di K}) derived in Remark \ref{remark:k-dd}.
\end{proof}

We denote by $j : P_U (\Omega ; \hh) \frecc \cc$ the convex set isomorphism established by Corollary \ref{cor. sull'isom. delle POVM con cc}.
\begin{proposition}
The map $j$ is a homeomorphism.
\end{proposition}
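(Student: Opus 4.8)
The plan is to deduce the homeomorphism property from three ingredients — compactness of $\cc$, Hausdorffness of $P_U(\Omega;\hh)$, and continuity of the single map $j^{-1}:\cc\frecc P_U(\Omega;\hh)$ that sends a kernel $K$ to the POVM $\ee$ of Corollary~\ref{cor. sull'isom. delle POVM con cc} — and then to invoke the soft fact that a continuous bijection from a compact space onto a Hausdorff space is automatically a homeomorphism. This avoids proving continuity of $j$ directly. That $P_U(\Omega;\hh)$ is Hausdorff is immediate: the seminorms $\no{\cdot}_{T;f}$ separate points, since a measure integrating every $f\in C(\Omega)$ to zero vanishes. Moreover, as $\hh$ is separable, $\trh$ has a countable dense subset $\ti_0$, and by Lemma~\ref{lemma sulla conv. in cc} the topology of $\cc$ is then generated by the \emph{countably many} seminorms $\no{\cdot}_{T_0;\rho,\pi}$ ($T_0\in\ti_0$, $\rho,\pi\in\hat G$); hence $\cc$ is metrizable, and it suffices to prove $j^{-1}$ is sequentially continuous.

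For compactness of $\cc$ I would use the bound \eqref{magg. di K}: every $K\in\cc$ obeys $\no{K(\rho,\pi)}\le d_\rho d_\pi$, so $\cc\subseteq\prod_{(\rho,\pi)}B_{\rho,\pi}$ with $B_{\rho,\pi}=\{A\in\lh:\no{A}\le d_\rho d_\pi\}$. Identifying $\lh=\trh^\ast$, the ultraweak topology is the weak-$\ast$ topology, so each $B_{\rho,\pi}$ is ultraweakly compact by Banach--Alaoglu. The topology of $\bb$ restricted to this product is exactly the product of the ultraweak topologies, so the product is compact by Tychonoff. Since $\cc$ is a closed subset of $\bb$ contained in this compact product, $\cc$ is compact.

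For sequential continuity of $j^{-1}$, let $K_n\to K$ in $\cc$, with associated POVMs $\ee_n,\ee$; I must show $\ee_n[T](f)\to\ee[T](f)$ for all $T\in\trh$, $f\in C(\Omega)$. As $\no{\ee_n}\le 2$ uniformly, we have $|\ee_n[S](f)|\le 2\no{f}\notr{S}$, so an $\tfrac{\eps}{3}$-argument reduces the claim to $T$ in the dense subspace spanned by the rank-one operators $\kb{v_\pi}{w_\rho}$ with $v_\pi\in\hh_\pi\otimes\kk_\pi$, $w_\rho\in\hh_\rho\otimes\kk_\rho$. For such $T$, Corollary~\ref{cor. sull'isom. delle POVM con cc} gives
\begin{equation*}
\ee_n[T](f)-\ee[T](f)=\int_\Omega f(\dot g)\,\ip{U(g)^\ast w_\rho}{(K_n(\rho,\pi)-K(\rho,\pi))\,U(g)^\ast v_\pi}\de\mu_\Omega(\dot g).
\end{equation*}
For each fixed $g$ the integrand tends to $0$, because $\kb{U(g)^\ast v_\pi}{U(g)^\ast w_\rho}\in\trh$ is a fixed trace-class operator and $K_n(\rho,\pi)\to K(\rho,\pi)$ ultraweakly; by \eqref{magg. di K} it is also dominated by the constant $\no{f}\,2d_\rho d_\pi\no{v_\pi}\no{w_\rho}$, which is integrable on the probability space $\Omega$. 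Dominated convergence then yields the conclusion.

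Combining these, $j^{-1}$ is a continuous bijection from the compact space $\cc$ onto the Hausdorff space $P_U(\Omega;\hh)$, hence a homeomorphism, and therefore so is $j$. The step I expect to be delicate is exactly the continuity of $j^{-1}$: passing to the limit under the integral is legitimate only after restricting to finitely supported trace-class operators, since the bound $\no{K(\rho,\pi)}\le d_\rho d_\pi$ grows with the sectors and need not furnish a summable dominating function for a general $T$. The two genuinely used inputs are thus the uniform bound $\no{\ee}\le 2$ (for the reduction) and the metrizability of $\cc$ coming from Lemma~\ref{lemma sulla conv. in cc} (so that dominated convergence, a statement about sequences, may be invoked at all). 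Proving continuity of $j$ directly would be harder, since recovering $K$ from $\ee$ amounts to evaluating a density at the identity coset, which is weakly continuous only because these densities lie in a fixed finite-dimensional span of matrix coefficients.
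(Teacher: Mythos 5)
Your proof is correct, but it runs the argument in the opposite direction from the paper. The paper proves that $j$ itself is continuous: given a net $\ee_\lambda\to\ee$ in $P_U(\Omega;\hh)$, it tests $K_\lambda=j(\ee_\lambda)$ against the trace-class operators $T=\int_\Omega f(\dot g)\,\kb{U(g)^{-1}v}{U(g)^{-1}w}\de\mu_\Omega(\dot g)$, identifies $\no{K_\lambda-K}_{T;\rho,\pi}$ with $\no{\ee_\lambda-\ee}_{\kb{v_\pi}{w_\rho};f}$, and then lets $f$ run through a Dirac sequence so that these $T$'s exhaust a dense subset of $\trh$, invoking Lemma~\ref{lemma sulla conv. in cc}; the conclusion then follows from the compactness of $P_U(\Omega;\hh)$ already established via the unit ball of $M_w(\Omega;\hh)$. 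You instead prove continuity of $j^{-1}$, which is the more mechanical direction because the POVM is given by an explicit integral formula in the kernel (Corollary~\ref{cor. sull'isom. delle POVM con cc}), and you compensate by supplying the compactness of $\cc$ yourself via the bound \eqref{magg. di K}, Banach--Alaoglu and Tychonoff, together with the closedness of $\cc$ in $\bb$ that the paper records. The one point where your route genuinely needs extra care is the use of dominated convergence, which is a statement about sequences; your reduction to sequences via the metrizability of $\cc$ (countably many seminorms from Lemma~\ref{lemma sulla conv. in cc} plus separability of $\trh$) closes that gap correctly, though you could also have avoided it by observing that the integrands $\dot g\mapsto\ip{U(g)^\ast w_\rho}{A\,U(g)^\ast v_\pi}$ lie in a fixed finite-dimensional span of matrix coefficients of $\rho$ and $\pi$, so pointwise convergence of a net already implies convergence of the integrals. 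Net effect: your argument makes the earlier compactness proposition for $M_w(\Omega;\hh)$ dispensable for this statement and yields compactness and metrizability of $\cc$ as byproducts, while the paper's argument is shorter given what it has already proved; both rest on the same soft fact that a continuous bijection from a compact space onto a Hausdorff space is a homeomorphism.
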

\begin{proof}
Since $P_U (\Omega ; \hh)$ is compact, it is enough to check that $j$ is continuous. Let $(\ee_\lambda)_{\lambda\in\Lambda}$ be a net converging to $\ee$ in $P_U (\Omega ; \hh)$, and let $K_\lambda = j (\ee_\lambda)$, $K = j (\ee)$. Fix $v,w\in\hh$, and let $v_\pi , w_\rho$ be the components of $v,w$ in $\hh_\pi \otimes \kk_\pi$ and $\hh_\rho \otimes \kk_\rho$, respectively. Let $f\in C(\Omega)$. Define the trace class operator $T = \int_\Omega f(\dot{g}) \kb{U(g)^{-1} v}{U(g)^{-1} w} \de\mu_\Omega (\dot{g})$. We then have
\begin{eqnarray*}
\no{K_\lambda - K}_{T;\rho,\pi} & = & \left| \int_\Omega f(\dot{g}) \ip{U(g)^{-1} w}{[K_\lambda (\rho,\pi) - K (\rho,\pi)] U(g)^{-1} v} \de\mu_\Omega (\dot{g}) \right| \\
& = & \no{\ee_\lambda - \ee}_{\kb{v_\pi}{w_\rho} \, ; f} \to 0 .
\end{eqnarray*}
Choosing as $f$ a Dirac sequence in $C(\Omega)$, we have $T \to \kb{v}{w}$ in $\trh$, and, since the linear span of rank one operators is dense in $\trh$, $K_\lambda \to K$ by Lemma \ref{lemma sulla conv. in cc}.
\end{proof}

\subsection{Characterization of extreme points}

As we have seen, the map $j : P_U (\Omega ; \hh) \frecc \cc$ estabilished in Corollary \ref{cor. sull'isom. delle POVM con cc} is an isomorphism of convex compact topological spaces. In this section, we determine the extreme points of $P_U (\Omega ; \hh)$ by characterising the extreme points of $\cc$.

Let $\ff (\hat{G} ; \hh)$ be the space of functions $f : \hat{G} \frecc \hh$.
For a fixed $K\in\cc$, we define the following linear subspace of $\ff (\hat{G} ; \hh)$
\begin{equation*}
\hh_K^0 := \spanno{ K(\cdot, \pi) v \mid \pi\in \hat{G}, \, v\in\hh}.
\end{equation*}
We set
$$
\ip{K(\cdot, \rho) w}{K(\cdot, \pi) v}_K = \ip{w}{K(\rho, \pi) v},
$$
and then $\scal{\cdot}{\cdot}_K$ extends to a scalar product on $\hh_K^0$. We denote by $\hh_K$ the completion of $\hh_K^0$ with respect to $\scal{\cdot}{\cdot}_K$. It can be shown that $\hh_K$ is still a subspace of $\ff (\hat{G} ; \hh)$ and that the evaluation maps
$$
\ev_\pi : \hh_K \frecc \hh , \qquad \ev_\pi f = f(\pi)
$$
are continuous for all $\pi\in\hat{G}$. Moreover, a straightforward calculation shows that
\begin{equation*}
K(\cdot, \pi) v = \ev_\pi^\ast v , \qquad K(\rho, \pi) = \ev_\rho \ev_\pi^\ast .
\end{equation*}
In particular, the set $\cup_{\pi\in\hat{G}} \ran \ev_\pi^\ast$ is total in $\hh_K$. 
The space $\hh_K$ is called the \emph{reproducing kernel Hilbert space} of $\hh$-valued functions associated to the \emph{reproducing kernel} $K$. For more details on the construction and properties of $\hh_K$ we refer to \cite{Pedrick57}.

For all $h\in H$ and $f\in\hh_K$, let
$$
[\tilde{U}(h) f] (\pi) : = U(h) f(\pi) \quad \forall \pi\in\hat{G} .
$$
By property {\rm (ii')} of $K$
$$
\tilde{U}(h) K(\cdot , \pi) v = K(\cdot , \pi) U(h) v \in \hh_K
$$
and
\begin{eqnarray*}
&& \ip{\tilde{U}(h) K(\cdot , \rho) w}{\tilde{U}(h) K(\cdot , \pi) v}_K = \ip{U(h) w}{K(\rho , \pi) U(h) v} \\
&& \qquad = \ip{w}{K(\rho , \pi) v} = \ip{K(\cdot , \rho) w}{K(\cdot , \pi) v}_K ,
\end{eqnarray*}
which shows by continuity that $\tilde{U}$ is a unitary representation of $H$ in $\hh_K$.
By definition, 
$$
\ev_\pi \tilde{U}(h) = U(h) \ev_\pi .
$$

\begin{remark}
The Hilbert space $\hh_K$ is unique in the following sense. Suppose $\tilde{\hh}$ is a Hilbert space carrying a unitary representation $V$ of $H$, and $\{ \gamma_\pi \}_{\pi \in\hat{G}}$ is a sequence of operators $\gamma_\pi : \hh \frecc \tilde{\hh}$ such that
\begin{enumerate}
\item $\gamma_\pi U(h) = V(h) \gamma_\pi$ for all $h\in\hh$;
\item $\cup_{\pi\in\hat{G}} \ran \gamma_\pi$ is total in $\tilde{\hh}$;
\item $K (\rho,\pi) = \gamma_\rho^\ast \gamma_\pi$ for all $\rho,\pi\in\hat{G}$.
\end{enumerate}
Then there exists a unitary map $W : \hh_K \frecc \tilde{\hh}$ intertwining $\tilde{U}$ with $V$ and such that $W \ev_\pi^\ast = \gamma_\pi$ for all $\pi\in\hat{G}$.
\end{remark}

We introduce the following closed subspace of $\ti (\hh)$
$$
\ti_U := \left\{ T\in\ti (\hh) \mid TU(g) = U(g)T \quad \forall g\in G \right\} = \spannochiuso{ I_{\hh_\pi} \otimes T_\pi \mid T_\pi \in \ti (\kk_\pi) }.
$$

The next two closed subspaces of $\ti (\hh_K)$ are essential in our investigation:
\begin{eqnarray*}
\ti_{\tilde{U}} & := & \left\{ T\in \ti (\hh_K) \mid T \tilde{U}(h) = \tilde{U}(h) T \quad \forall h\in H \right\}, \\
\tilde{\ti}_U & := & \spannochiuso{\ev_\pi^\ast T \ev_\pi \mid T \in \ti_U ,\, \pi\in\hat{G}}.
\end{eqnarray*}
Clearly, $\tilde{\ti}_U \subseteq \ti_{\tilde{U}}$.

We proceed by defining a bounded mapping $\pp: \ti (\hh_K) \frecc \ti (\hh_K)$ by formula
\begin{equation*}
\pp T = \int \tilde{U}(h) T \tilde{U}(h)^\ast \de\mu_H(h).
\end{equation*}
Then $\pp^2 = \pp$ and $\ran P = \ti_{\tilde{U}}$. The adjoint $\pp^\prime : \elle{\hh_K} \frecc \elle{\hh_K}$ satisfies $\pp^{\prime 2} = \pp^\prime$, and
$$
\ran \pp^\prime = \left\{ B\in\elle{\hh_K} \mid B \tilde{U}(h) = \tilde{U}(h) B \quad \forall h\in H \right\}.
$$
It follows that
\begin{equation}\label{duale di T Utilde}
\ti_{\tilde{U}}^\ast = \left\{ B\in\elle{\hh_K} \mid B \tilde{U}(h) = \tilde{U}(h) B \quad \forall h\in H \right\}.
\end{equation}

\begin{theorem}\label{th:notin}
$K \notin \textrm{ext}\, \cc$ if and only if there exists a nonzero operator $B\in\elle{\hh_K}$ such that
\begin{itemize}
\item[(i)] $B \tilde{U}(h) = \tilde{U}(h) B$ for all $h\in H$;
\item[(ii)] $\tr{BT} = 0$ for all $T \in \tilde{\ti}_U$.
\end{itemize}
\end{theorem}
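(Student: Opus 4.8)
The plan is to realise non-extremality of $K$ as the existence of a two-sided perturbation inside $\cc$ and then to read off, condition by condition, what such a perturbation amounts to on the reproducing kernel space $\hh_K$. First I would note that $K\notin\textrm{ext}\,\cc$ precisely when there is a nonzero Hermitian kernel $D:\hat{G}\times\hat{G}\frecc\elle{\hh}$ (that is, $D(\pi,\rho)=D(\rho,\pi)^\ast$) with both $K+D\in\cc$ and $K-D\in\cc$: from a nontrivial splitting $K=\frac12(K_1+K_2)$ one takes $D=\frac12(K_1-K_2)$, and conversely $K=\frac12((K+D)+(K-D))$; Hermiticity of $D$ is automatic since every kernel of positive type is Hermitian. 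As conditions (ii') and (iv') are linear and are met with equality by $K$, they translate into $D(\rho,\pi)U(h)=U(h)D(\rho,\pi)$ for all $h\in H$ and $\traccabis{\rho}{D(\pi,\pi)}=0$ for all $\rho,\pi\in\hat{G}$, while (i') is automatic.

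The heart of the argument is condition (iii'). Using $K(\rho,\pi)=\ev_\rho\ev_\pi^\ast$, I would show that the Hermitian kernels $D$ for which $K+D$ and $K-D$ are both of positive type are exactly those of the form $D(\rho,\pi)=\ev_\rho B\ev_\pi^\ast$ with $B=B^\ast\in\elle{\hh_K}$ and $\no{B}\leq 1$. One direction is a direct computation: for such $B$ and $\xi=\sum_\pi\ev_\pi^\ast v^\pi$ one has $\sum_{\rho,\pi}\ip{v^\rho}{(K\pm D)(\rho,\pi)v^\pi}=\ip{\xi}{(I\pm B)\xi}_K\geq 0$. For the converse I would invoke the reproducing kernel domination theorem (\cite{Pedrick57}): the kernel $\frac12(K+D)$ satisfies $0\leq\frac12(K+D)\leq K$, so the quadratic form $\sum_\pi\ev_\pi^\ast v^\pi\mapsto\sum_{\rho,\pi}\ip{v^\rho}{\frac12(K+D)(\rho,\pi)v^\pi}$ is well defined on $\hh_K^0$ and bounded by $\no{\cdot}_K^2$, hence is represented by some $0\leq B_0\leq I$ with $\frac12(K+D)(\rho,\pi)=\ev_\rho B_0\ev_\pi^\ast$, and $B:=2B_0-I$ works. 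I expect this to be the main obstacle, being the point at which the RKHS structure is actually used.

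It then remains to convert the two constraints on $D$ into conditions (i) and (ii) of the statement. From the intertwining $\ev_\pi\tilde{U}(h)=U(h)\ev_\pi$ one computes $U(h)D(\rho,\pi)-D(\rho,\pi)U(h)=\ev_\rho(\tilde{U}(h)B-B\tilde{U}(h))\ev_\pi^\ast$; since $\cup_\pi\ran\ev_\pi^\ast$ is total in $\hh_K$, the commutation of every $D(\rho,\pi)$ with $U(h)$ is equivalent to $B\tilde{U}(h)=\tilde{U}(h)B$, which is (i). For the trace condition, $\traccabis{\rho}{D(\pi,\pi)}=0$ for all $\rho$ is equivalent to $\tr{D(\pi,\pi)S}=0$ for all $S\in\ti_U$, because the operators $I_{\hh_\rho}\otimes T_\rho$ implement $\traccabis{\rho}{\cdot}$ through the trace pairing and span a dense subspace of $\ti_U$; by cyclicity of the trace $\tr{D(\pi,\pi)S}=\tr{\ev_\pi B\ev_\pi^\ast S}=\tr{B\,\ev_\pi^\ast S\ev_\pi}$, and as $\pi$ ranges over $\hat{G}$ and $S$ over $\ti_U$ the operators $\ev_\pi^\ast S\ev_\pi$ generate $\tilde{\ti}_U$, so the constraint becomes $\tr{BT}=0$ for all $T\in\tilde{\ti}_U$, which is (ii).

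Combining these steps shows that $K$ is non-extremal if and only if there is a nonzero \emph{self-adjoint contraction} $B$ on $\hh_K$ satisfying (i) and (ii). To obtain the statement as written I would finally discard the normalisation: conditions (i) and (ii) are linear and stable under adjoints (both $\ti_U$ and hence $\tilde{\ti}_U$ are $\ast$-closed, so $B^\ast$ inherits (i) and (ii) whenever $B$ does), so from any nonzero $B$ obeying (i) and (ii) at least one of the self-adjoint operators $B+B^\ast$ and $i(B-B^\ast)$ is nonzero and still obeys them, and rescaling makes it a contraction; conversely a self-adjoint contraction is in particular a nonzero operator satisfying (i) and (ii).
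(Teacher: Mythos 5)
Your proposal is correct and follows essentially the same route as the paper: the paper's forward direction constructs $K_\pm(\rho,\pi) = \ev_\rho (I\pm \no{B}^{-1}B)\ev_\pi^\ast$ exactly as in your direct computation, and its converse builds the operators $B_\pm$ representing the dominated forms $\scal{\cdot}{\cdot}_\pm$ on $\hh_K$ via the Cauchy--Schwarz estimate $K_\pm \leq 2K$, which is precisely the content of the domination theorem you invoke from \cite{Pedrick57}. The only differences are cosmetic: you package the two halves as a single characterization of the admissible perturbations $D$, and you spell out the reduction to self-adjoint $B$ that the paper dispatches in its opening remark.
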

\begin{proof}
We note that the content of Theorem \ref{th:notin} remains the same if we assume that $B = B^\ast$. This follows from the fact that the adjoint operation leaves $\tilde{\ti}_U$ invariant.
\begin{itemize}
\item[$\Longleftarrow )$] Define
$$
K_{\pm} (\rho,\pi) = \ev_\rho (I\pm \no{B}^{-1} B) \ev_\pi^\ast.
$$
Since the set $\cup_{\pi\in\hat{G}} \ran \ev_\pi^\ast$ is total in $\hh_K$, $B\neq O$ implies that $K_+ \neq K_-$.

It follows from $I\pm \no{B}^{-1} B \geq O$ that 
$$
\sum_{\pi,\rho} \ip{v^\rho}{K_\pm (\rho , \pi) v^\pi} = \ip{\sum_\rho \ev_\rho^\ast v^\rho}{(I\pm \no{B}^{-1} B) \sum_\pi \ev_\pi^\ast v^\pi}_K \geq 0
$$
for all finite sequences $ \{ v^\pi \} $ in $\hh$. Hence, $K_+$ and $K_-$ are positive definite.

The operator $K_\pm (\rho, \pi)$ commutes with $U |_H $ by the intertwining properties of $\ev_\pi,\ev_\rho$ and $B$.

Finally, if $T_\rho \in \ti (\kk_\rho)$ then
\begin{eqnarray*}
\tr{T_\rho \traccabis{\rho}{K_\pm (\pi, \pi)}} & = & \tr{(I_{\hh_\rho} \otimes T_\rho) K_\pm (\pi, \pi)} \\
& = & \tr{\ev_\pi^\ast (I_{\hh_\rho} \otimes T_\rho) \ev_\pi (I\pm \no{B}^{-1} B)} \\
& = & \tr{\ev_\pi^\ast (I_{\hh_\rho} \otimes T_\rho) \ev_\pi } = \tr{(I_{\hh_\rho} \otimes T_\rho) K (\pi, \pi) } \\
& = & \tr{T_\rho \traccabis{\rho}{K (\pi, \pi)}}
\end{eqnarray*}
i.e.
$$
\traccabis{\rho}{K_\pm (\pi, \pi)} = \traccabis{\rho}{K (\pi, \pi)} =
\delta_{\rho \pi} d_\rho I_{\kk_\rho}.
$$
We conclude that $K_\pm \in \cc$. Since $K = \half (K_+ + K_-)$, this means that $K$ is not an extreme point.

\item[$\Longrightarrow )$] Suppose $K = \half (K_+ + K_-)$, with $K_\pm \in \cc$ and $K_\pm \neq K$.

Define the following sesquilinear positive definite forms $\scal{\cdot}{\cdot}_\pm$ in $\hh_K^0$
$$
\ip{\sum_\rho \ev_\rho^\ast w^\rho}{\sum_\pi \ev_\pi^\ast v^\pi}_\pm = \sum_{\pi,\rho} \ip{w^\rho}{K_\pm(\rho , \pi) v^\pi}
$$
for all finite sequences $\{ v^\pi \}$ and $\{ w^\rho \}$ in $\hh$. The forms $\scal{\cdot}{\cdot}_\pm$ are well-defined. In fact, we have
\begin{equation*}
\sum_{\pi,\rho} \ip{v^\rho}{K_\pm(\rho , \pi) v^\pi} \leq 2 \sum_{\pi,\rho} \ip{v^\rho}{K(\rho , \pi) v^\pi}
\end{equation*}
and hence
\begin{eqnarray*}
&&  \left| \ip{\sum_\rho \ev_\rho^\ast w^\rho}{\sum_\pi \ev_\pi^\ast v^\pi}_\pm \right|  \leq \textrm{\emph{by Cauchy-Schwartz}} \\
&&\qquad \leq \left[\sum_{\pi,\pi^\prime} \ip{v^{\pi^\prime}}{K_\pm(\pi^\prime , \pi) v^\pi} \right]^{1/2}
\left[\sum_{\rho,\rho^\prime} \ip{w^{\rho^\prime}}{K_\pm(\rho^\prime , \rho) w^\rho} \right]^{1/2}  \\
&&\qquad \leq 2 \left[\sum_{\pi,\pi^\prime} \ip{v^{\pi^\prime}}{K (\pi^\prime , \pi) v^\pi} \right]^{1/2}
\left[\sum_{\rho,\rho^\prime} \ip{w^{\rho^\prime}}{K (\rho^\prime , \rho) w^\rho} \right]^{1/2}  \\
&&\qquad = 2 \no{\sum_\pi \ev_\pi^\ast v^\pi}_{K} \no{\sum_\rho \ev_\rho^\ast w^\rho}_{K}.
\end{eqnarray*}
By the above equation, we see also that $\scal{\cdot}{\cdot}_\pm$ are bounded. So, there are positive operators $B_+ , B_- \in \elle{\hh_K}$ such that $\ip{g}{f}_\pm = \ip{g}{B_\pm f}_{K}$ for all $f,g \in \hh_K$. By definition, 
$K_\pm ( \pi , \pi ) = \ev_\pi B_\pm \ev_\pi^\ast$.

Set $B = B_+ - B_-$. Then, $B\neq 0$ since $K_+ \neq K_-$. Since
\begin{eqnarray*}
&&\ip{\tilde{U}(h) \ev_\rho^\ast w}{\tilde{U}(h) \ev_\pi^\ast v}_\pm = \ip{\ev_\rho^\ast U(h) w}{\ev_\pi^\ast U(h) v}_\pm \\
&&\qquad = \ip{U(h) w}{K_\pm (\rho , \pi) U(h) v}
=  \ip{w}{K_\pm (\rho , \pi) v} \\
&&\qquad = \ip{\ev_\rho^\ast w}{\ev_\pi^\ast v}_\pm
\end{eqnarray*}
and $\cup_{\pi\in\hat{G}} \ran \ev_\pi^\ast$ is total in $\hh_K$, it follows that
$B$ commutes with $\tilde{U}$.

Finally, if $T = I_{\hh_\rho} \otimes T_\rho$ with $T_\rho \in \ti(\kk_\rho)$ we have
\begin{eqnarray*}
\tr{B_\pm \ev_\pi^\ast T \ev_\pi} & = & \tr{\ev_\pi B_\pm \ev_\pi^\ast T} = \tr{K_\pm ( \pi , \pi ) T } \\
& = & \tr{T_\rho \traccabis{\rho}{K_\pm (\pi , \pi)}} = \delta_{\rho \pi} d_\rho \tr{T_\rho} .
\end{eqnarray*}
It follows that $\tr{B \ev_\pi^\ast T \ev_\pi} = 0$, i.e., $B$ satisfies item (ii) of the theorem.
\end{itemize}
\end{proof}

\begin{corollary}\label{cor:ext}
$K\in \textrm{ext}\, \cc$ if and only if $\tilde{\ti}_U = \ti_{\tilde{U}}$.
\end{corollary}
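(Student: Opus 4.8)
The plan is to deduce the corollary directly from Theorem~\ref{th:notin} by a Hahn--Banach duality argument, exploiting that the averaging map $\pp$ is a bounded idempotent. First I would recast the theorem in dual language. By eq.~(\ref{duale di T Utilde}), the operators $B\in\elle{\hh_K}$ satisfying item (i) of Theorem~\ref{th:notin} are exactly the elements of $\ti_{\tilde{U}}^\ast=\ran\pp^\prime$, while item (ii) asserts that $B$ annihilates the closed subspace $\tilde{\ti}_U$ with respect to the trace pairing $\langle B,T\rangle=\tr{BT}$. Thus Theorem~\ref{th:notin} reads: $K\notin\textrm{ext}\,\cc$ if and only if there is a nonzero $B\in\ti_{\tilde{U}}^\ast$ with $\tr{BT}=0$ for all $T\in\tilde{\ti}_U$.

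The central step is to identify $\ti_{\tilde{U}}^\ast$, equipped with the trace pairing, with the Banach-space dual of $\ti_{\tilde{U}}$. Here I would use that $\pp:\ti(\hh_K)\frecc\ti(\hh_K)$ is a bounded idempotent with $\ran\pp=\ti_{\tilde{U}}$ and that $\pp^\prime$ is its Banach adjoint, so that $\tr{(\pp^\prime B)T}=\tr{B\,\pp T}$ for all $B\in\elle{\hh_K}$ and $T\in\ti(\hh_K)$. Injectivity of the restriction map $\ti_{\tilde{U}}^\ast\frecc(\ti_{\tilde{U}})^\ast$ follows because a $B\in\ran\pp^\prime$ with $\tr{BT}=0$ for all $T\in\ran\pp$ gives $\tr{(\pp^\prime B)S}=\tr{B\,\pp S}=0$ for every $S\in\ti(\hh_K)$, whence $B=\pp^\prime B=O$. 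Surjectivity follows by extending any bounded functional on $\ti_{\tilde{U}}$ to some $B_0\in\elle{\hh_K}$ via Hahn--Banach and observing that $\pp^\prime B_0\in\ti_{\tilde{U}}^\ast$ represents the same functional on $\ran\pp$, since $\pp T=T$ there. This yields an isomorphism between $\ti_{\tilde{U}}^\ast$ and $(\ti_{\tilde{U}})^\ast$ compatible with the pairing.

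With this identification in hand, item (ii) says precisely that the functional on $\ti_{\tilde{U}}$ determined by $B$ vanishes on the closed subspace $\tilde{\ti}_U$. By the Hahn--Banach theorem, a nonzero such functional exists exactly when $\tilde{\ti}_U$ is a proper closed subspace of $\ti_{\tilde{U}}$; since $\tilde{\ti}_U\subseteq\ti_{\tilde{U}}$ is closed by construction, this is the condition $\tilde{\ti}_U\neq\ti_{\tilde{U}}$. Combining with the reformulation of Theorem~\ref{th:notin}, I conclude that $K\notin\textrm{ext}\,\cc$ if and only if $\tilde{\ti}_U\subsetneq\ti_{\tilde{U}}$, which is exactly the asserted equivalence $K\in\textrm{ext}\,\cc\iff\tilde{\ti}_U=\ti_{\tilde{U}}$.

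I expect the identification of $\ti_{\tilde{U}}^\ast$ with the dual of $\ti_{\tilde{U}}$ to be the only delicate point: one must check that the operators permitted by item (i) are exactly the representatives of $(\ti_{\tilde{U}})^\ast$, neither too few (so that no separating functional is overlooked) nor redundant (so that a nonzero operator yields a nonzero functional). The idempotency of $\pp$, together with the annihilator identity $\ker\pp^\prime=(\ran\pp)^\perp$, is what makes this work cleanly; the concluding application of Hahn--Banach is then routine.
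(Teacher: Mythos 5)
Your argument is correct and follows essentially the same route as the paper: the paper's one-line proof invokes eq.~(\ref{duale di T Utilde}) precisely as the identification of the commutant of $\tilde{U}$ with the dual $\ti_{\tilde{U}}^\ast$, and then reads Theorem~\ref{th:notin} as the existence of a nonzero functional annihilating $\tilde{\ti}_U$. You have merely made explicit the duality bookkeeping via $\pp$, $\pp^\prime$ and the Hahn--Banach step that the paper leaves implicit.
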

\begin{proof}
By Theorem \ref{th:notin} and equation \eqref{duale di T Utilde}, the inclusion $\tilde{\ti}_U \subseteq \ti_{\tilde{U}}$ is an equality if and only if $K\in \textrm{ext}\, \cc$.
\end{proof}

\begin{example}
Let us consider the case when $G$ is abelian, $H = \{ e \}$, and $\dim \kk_\pi = 1$ for all $\pi$. 
We fix an orthonormal basis $\{ e_\pi \}_{\pi\in\hat{G}}$ in $\hh= \oplus_{\pi\in \hat{G}} \hh_\pi$ such that $U(g) e_\pi = \pi (g) e_\pi$ for all $g\in G$. Then
$$
\ti_U = \spannochiuso{ \kb{e_\pi}{e_\pi} \mid \pi\in\hat{G}}.
$$
If $K\in\cc$, then $K(\rho , \pi) = \tilde{K} (\rho , \pi) \kb{e_\rho}{e_\pi}$, $\tilde{K}$ being a positive semidefinite complex matrix with $\tilde{K} (\pi,\pi) = 1$. This sets up an isomorphism $K\mapsto \tilde{K}$ from the convex set $\cc$ into the convex set $\tilde{\cc}$ of positive semidefinite matrices with $1$ as every diagonal element.

Let $\hh_{\tilde{K}}$ be the reproducing kernel Hilbert space of $\C$-valued functions associated to $\tilde{K}$, i.e.~there exists a total sequence $\{ \eta_\pi \}_{\pi\in\hat{G}}$ in $\hh_{\tilde{K}}$ such that $\tilde{K} (\rho,\pi) = \ip{\eta_\rho}{\eta_\pi}_{\hh_{\tilde{K}}}$.
Then we can set up a unitary isomorphism $j : \hh_K \frecc \hh_{\tilde{K}}$, given by
$$
\ip{\eta_\pi}{ j(f) }_{\hh_{\tilde{K}}} := \ip{e_\pi}{\ev_\pi f} .
$$
We thus have, with easy calculations,
$$
j(\ev_\pi^\ast e_\pi) = \eta_\pi ,
$$
and so
\begin{eqnarray*}
j \ti_{\tilde{U}} j^\ast & = & j \ti (\hh_K) j^\ast = \ti (\hh_{\tilde{K}}) \\
j \tilde{\ti}_U j^\ast & = & \spannochiuso{ \kb{j \ev_\pi^\ast e_\pi}{ j \ev_\pi^\ast e_\pi} \mid \pi \in\hat{G}} = \spannochiuso{ \kb{\eta_\pi}{\eta_\pi} \mid \pi \in\hat{G}} .
\end{eqnarray*}
Corollary \ref{cor:ext} then says that $\tilde{K} \in \textrm{ext}\, \tilde{\cc}$ if and only if
$$
\spannochiuso{ \kb{\eta_\pi}{ \eta_\pi} \mid \pi \in\hat{G}} = \ti (\hh_{\tilde{K}}).
$$
This result has also been derived in \cite{KiPe06}.
\end{example}

\subsection{Rank 1 extremals}\label{Rank}

Suppose $K\in\cc$. Let ${\rm rank}\, K$ denote the dimension of $\hh_K$ (or, equivalently, the dimension of $\hh_K^0$). Next we assume that ${\rm rank}\, K=1$. This means that there exists $f\in\hh_K$, $\langle f|f\rangle_K=1$ such that for all $\pi\in \hat{G}, \, v\in\hh$,
$K(\cdot, \pi) v=c_{\pi,v} f$ for some $c_{\pi,v}\in\C$; or more precisely
$K(\rho,\pi) v=c_{\pi,v} f(\rho).$ Thus, 
\begin{equation*}
c_{\pi,v}\langle w|{f(\rho)}\rangle = \langle w|{K(\rho,\pi) v}\rangle = \langle{K(\cdot, \rho) w}|{K(\cdot, \pi) v}\rangle_K = \overline{c_{\rho,w}}c_{\pi,v}\langle{f}|{f}\rangle_K=\overline{c_{\rho,w}} c_{\pi,v} ,
\end{equation*}
which implies that $K(\rho, \pi) v=\langle{f(\pi)}|v\rangle f(\rho)$ or
$
K(\rho,\pi)=|f(\rho)\rangle\langle{f(\pi)}|,
$
in short.

It follows from Remark \ref{remark:k-in-c} that $f(\rho)\in\hh_\rho\otimes\kk_\rho$, so that we can write
$$
f(\rho)=\sum_{m,n}f^\rho_{mn}h^\rho_m\otimes k^\rho_n
$$
where $\{h^\rho_m\}_{m=1}^{d_\rho}$ and $\{k^\rho_n\}_{n=1}^{\dim\kk_\rho}$
are orthonormal bases of $\hh_\rho$ and $\kk_\rho$, respectively.
Since $\traccabis{\rho}{K(\pi,\pi)} = \delta_{\rho \pi} d_\rho I_{\kk_\rho}$
and
$$
\traccabis{\rho}{|f(\rho)\rangle\langle{f(\rho)}|}=\sum_{n,l}
\left(\sum_m f_{mn}^\rho\overline{f_{ml}^\rho}\right)
|k_n^\rho\rangle\langle k_l^\rho|
$$
one gets
$$
\sum_m f_{mn}^\rho\overline{f_{ml}^\rho}=d_\rho\delta_{nl}.
$$
By defining 
$$
f^\rho_n:=d_\rho^{-1/2}\sum_{m}f^\rho_{mn}h^\rho_m
$$
we see that the above condition equals $\langle f^\rho_l|f^\rho_n\rangle_{\hh_\rho}=\delta_{ln}$, so that
$\{f^\rho_n\}_{n=1}^{\dim \kk_\rho}$ is an orthonormal set of $\hh_\rho$.
Hence, necessarily
$$
\dim\kk_\rho\le d_\rho
$$
and
$$
f(\rho)=d_\rho^{1/2}\sum_{n}f^\rho_{n}\otimes k^\rho_n.
$$

Since $\dim \hh_K = 1$, the representation $\tilde{U}$ reduces to a character $\lambda$ of $H$. This means $U(h) f(\rho) = [\tilde{U}(h) f](\rho) = \lambda(h) f (\rho)$, i.e., $\sum_{n} \rho(h) f^\rho_{n} \otimes k^\rho_n = \lambda(h) \sum_{n} f^\rho_{n} \otimes k^\rho_n$. Hence,
$$
\rho(h) f^\rho_{n} = \lambda(h) f^\rho_{n} \quad \forall h\in H, \, n=1,2,\ldots \dim \kk_\rho .
$$

We have arrived at the following result.
\begin{proposition}\label{rank1}
There exists rank $1$ kernels if and only if
\begin{enumerate}
\item $\dim\kk_\rho\le d_\rho$ for all $\rho\in\hat G$;
\item there exists a character $\lambda\in\hat{H}$ and, for all $\rho\in\hat G$, a subspace $\hh_\rho^\prime \subseteq \hh_\rho$ such that
\begin{enumerate}
\item $\dim \hh_\rho^\prime \geq \dim \kk_\rho$,
\item $\rho (h) |_{\hh_\rho^\prime} = \lambda (h)$ for all $h\in H$.
\end{enumerate}
\end{enumerate}
In this case, fix an orthonormal basis $\{ k_n^\sigma \}_{n=1}^{\dim\kk_\sigma}$ of $\kk_\sigma$ for all $\sigma\in\hat{G}$. Then a kernel $K$ is a rank $1$ element of $\cc$ if and only if
$$
K(\rho,\pi)=\sqrt{d_\rho d_\pi}\sum_{n=1}^{\dim\kk_\rho}\sum_{m=1}^{\dim\kk_\pi}
|f_n^\rho\otimes k_n^\rho\rangle\langle f_m^\pi\otimes k_m^\pi|
$$
where $\{f^\sigma_n\}_{n=1}^{\dim\kk_\sigma}$ is any orthonormal set in $\hh^\prime_\sigma$, $\sigma\in\hat G$.
\end{proposition}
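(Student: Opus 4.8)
The plan is to recognize that the computations carried out \emph{above} the statement already give the necessity (``only if'') direction, and then to devote the bulk of the argument to the converse, where one must check that the candidate kernel genuinely belongs to $\cc$.

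Concretely, for the ``only if'' part I would start from an arbitrary rank $1$ kernel $K(\rho,\pi)=\kb{f(\rho)}{f(\pi)}$ and expand $f(\rho)$ in the fixed basis $\{k^\rho_n\}$ of $\kk_\rho$. The displayed computations then produce orthonormal families $\{f^\rho_n\}_{n=1}^{\dim\kk_\rho}$ in $\hh_\rho$, which already forces $\dim\kk_\rho\le d_\rho$, i.e.\ condition (1). Since $\dim\hh_K=1$, the representation $\tilde U$ is a single character $\lambda\in\hat H$, and the identity $[\tilde U(h)f](\rho)=\lambda(h)f(\rho)$ holds \emph{simultaneously} for all $\rho$; expanding it gives $\rho(h)f^\rho_n=\lambda(h)f^\rho_n$. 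Taking $\hh'_\rho=\spanno{f^\rho_n\mid n}$ (and $\hh'_\rho=\{0\}$ when $\dim\kk_\rho=0$) yields condition (2), and $K$ is visibly of the displayed form with this choice, so the same $\lambda$ works uniformly in $\rho$.

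For the converse I would assume (1) and (2), fix the bases $\{k^\sigma_n\}$, choose any orthonormal $\{f^\sigma_n\}_{n=1}^{\dim\kk_\sigma}$ in $\hh'_\sigma$ (possible by (2)(a)), and set
$$
f(\rho)=d_\rho^{1/2}\sum_{n}f^\rho_n\otimes k^\rho_n\in\hh_\rho\otimes\kk_\rho,
\qquad
K(\rho,\pi)=\kb{f(\rho)}{f(\pi)}.
$$
Then I verify the four properties defining $\cc$. Boundedness (i') is immediate since each $K(\rho,\pi)$ is rank one. For positivity (iii') one computes $\sum_{\rho,\pi}\ip{v^\rho}{K(\rho,\pi)v^\pi}=\big|\sum_\pi\ip{f(\pi)}{v^\pi}\big|^2\ge 0$. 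For the commutation property (ii') the crucial observation is $U(h)f(\rho)=d_\rho^{1/2}\sum_n\rho(h)f^\rho_n\otimes k^\rho_n=\lambda(h)f(\rho)$ by (2)(b); combined with $U(h)^\ast=U(h^{-1})$ and $|\lambda(h)|=1$ this makes \emph{both} the left and right actions of $U(h)$ on $\kb{f(\rho)}{f(\pi)}$ collapse to multiplication by $\lambda(h)$, giving $U(h)K(\rho,\pi)=K(\rho,\pi)U(h)$. Finally, for the normalization (iv') I would write $K(\pi,\pi)=d_\pi\sum_{n,l}\kb{f^\pi_n}{f^\pi_l}\otimes\kb{k^\pi_n}{k^\pi_l}$ and apply $\traccabis{\rho}{A_\pi\otimes B_\pi}=\delta_{\rho\pi}({\rm tr}_{\hh_\pi}A_\pi)B_\pi$ together with $\ip{f^\pi_l}{f^\pi_n}=\delta_{ln}$ to obtain $\traccabis{\rho}{K(\pi,\pi)}=\delta_{\rho\pi}d_\pi\sum_n\kb{k^\pi_n}{k^\pi_n}=\delta_{\rho\pi}d_\pi I_{\kk_\pi}$.

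It then only remains to record that this $K$ has rank $1$: since $K(\cdot,\pi)v=\ip{f(\pi)}{v}\,f$, the space $\hh^0_K=\spanno{K(\cdot,\pi)v}$ is the line $\C f$, so $\dim\hh_K=1$. This proves existence under (1)--(2), and together with the first paragraph gives the claimed ``if and only if'' for the explicit form of $K$. I expect no deep obstacle here, since the substantive reduction was performed before the statement; the one place that genuinely demands care is the commutation property (ii'), where the adjoint $U(h)^\ast=U(h^{-1})$ and the unitarity of the character $\lambda$ must be handled correctly so that the two one-sided actions of $U(h)$ agree.
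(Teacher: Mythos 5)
Your proposal is correct and follows essentially the same route as the paper: the necessity direction is exactly the computation carried out in the text preceding the statement, and the converse is verified by defining $f(\rho)=d_\rho^{1/2}\sum_n f^\rho_n\otimes k^\rho_n$, setting $K(\rho,\pi)=\kb{f(\rho)}{f(\pi)}$, and checking positivity, the commutation with $U|_H$ via $U(h)f(\rho)=\lambda(h)f(\rho)$, the partial-trace normalization, and $\hh_K=\C f$. Your write-up merely spells out in more detail the steps the paper dispatches with ``one easily checks.''
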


\begin{proof}
We still need to prove the converse of the above proposition. With 
$$
f(\rho) := d_\rho^{1/2} \sum_{n=1}^{\dim\kk_\rho}f_n^\rho\otimes k_n^\rho,
$$
we see that $K(\rho,\pi) = \kb{f(\rho)}{f(\pi)}$, hence $K$ is of positive type. By item (2), we clearly have $U(h) K(\rho,\pi) = K(\rho,\pi) U(h)$ for all $h\in H$. One easily checks $\traccabis{\rho}{K(\pi,\pi)} = d_\rho \delta_{\rho \pi} I_{\kk_\rho}$. Finally, $\hh_K = \C f$, so ${\rm rank}\, K = 1$.
\end{proof}
In the finite dimensional case, it was already noted in \cite[Proposition~1]{ChDa04} that there do not exist rank 1 kernels if $\dim \kk_\rho > d_\rho$ for some $\rho\in \hat{G}$.

If $\dim\hh < \infty$, the next proposition is \cite[Corollary~1]{ChDa04}.
\begin{proposition}
Any rank 1 kernel $K$ is extremal.
\end{proposition}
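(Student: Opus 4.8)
The plan is to apply Corollary~\ref{cor:ext}, which reduces the extremality of $K$ to the single equality $\tilde{\ti}_U = \ti_{\tilde{U}}$ of subspaces of $\ti(\hh_K)$. The whole argument will rest on the hypothesis $\mathrm{rank}\,K = 1$, i.e.\ $\dim\hh_K = 1$, which forces $\ti(\hh_K)$ to be one-dimensional. Consequently $\ti_{\tilde{U}}$, being a subspace of $\ti(\hh_K)$, is at most one-dimensional; in fact, since $\tilde{U}$ reduces to a character $\lambda$ of $H$ (as already observed in the discussion preceding Proposition~\ref{rank1}), $\tilde{U}(h)$ acts as a scalar and hence commutes with every operator, so $\ti_{\tilde{U}} = \ti(\hh_K)$ has dimension exactly one. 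Because the inclusion $\tilde{\ti}_U \subseteq \ti_{\tilde{U}}$ always holds, the proof will be complete as soon as I exhibit a single nonzero element of $\tilde{\ti}_U$: two nonzero subspaces of a one-dimensional space must coincide.

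To produce such an element, I would pick $\pi\in\hat{G}$ with $\kk_\pi \neq \{0\}$ (such $\pi$ exists since $\hh\neq\{0\}$) and a nonzero positive $T_\pi\in\ti(\kk_\pi)$, for instance a rank-one projection $\kb{k}{k}$ with $\no{k}=1$, so that $\tr{T_\pi} = 1$. Setting $T = I_{\hh_\pi}\otimes T_\pi \in \ti_U$, the operator $\ev_\pi^\ast T \ev_\pi$ belongs to $\tilde{\ti}_U$ by definition. To see that it is nonzero I would compute its trace: by cyclicity and the identity $K(\pi,\pi) = \ev_\pi\ev_\pi^\ast$,
$$
\tr{\ev_\pi^\ast T \ev_\pi} = \tr{T\,\ev_\pi\ev_\pi^\ast} = \tr{(I_{\hh_\pi}\otimes T_\pi)\,K(\pi,\pi)} = \tr{T_\pi\,\traccabis{\pi}{K(\pi,\pi)}},
$$
and then invoke property~(iv'), namely $\traccabis{\pi}{K(\pi,\pi)} = d_\pi I_{\kk_\pi}$, to conclude that the trace equals $d_\pi \tr{T_\pi} = d_\pi \neq 0$. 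Hence $\ev_\pi^\ast T \ev_\pi$ is a nonzero member of $\tilde{\ti}_U$, which forces $\tilde{\ti}_U = \ti_{\tilde{U}}$ and, by Corollary~\ref{cor:ext}, the extremality of $K$.

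I do not expect any serious obstacle here: once one-dimensionality of $\hh_K$ is exploited, everything collapses and the only substantive content is the non-vanishing of a single trace. The one point that needs a little care is the trace bookkeeping across the maps $\ev_\pi:\hh_K\frecc\hh$ and $\ev_\pi^\ast:\hh\frecc\hh_K$ — the cyclicity step must be read as an identity between the trace over $\hh_K$ of $\ev_\pi^\ast(T\ev_\pi)$ and the trace over $\hh$ of $(T\ev_\pi)\ev_\pi^\ast$ — together with the remark that $T = I_{\hh_\pi}\otimes T_\pi$ is genuinely trace class. The latter is automatic for rank~$1$ kernels, since $\dim\kk_\pi \leq d_\pi < \infty$ by Proposition~\ref{rank1}, though choosing a rank-one $T_\pi$ makes this moot in any case.
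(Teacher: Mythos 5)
Your proposal is correct and follows essentially the same route as the paper: reduce to Corollary~\ref{cor:ext}, note that $\tilde U$ acting by the character $\lambda$ forces $\ti_{\tilde U}=\ti(\hh_K)$ to be one-dimensional, and then exhibit a nonzero element of $\tilde{\ti}_U$ of the form $\ev_\pi^\ast(I_{\hh_\pi}\otimes T_\pi)\ev_\pi$. The only cosmetic difference is that you verify non-vanishing by computing the trace via cyclicity and property~(iv$'$), whereas the paper evaluates the single matrix element $\ip{f}{\ev_\pi^\ast(I_{\hh_\pi}\otimes T_\pi)\ev_\pi f}_K$ using the explicit form of $f(\pi)$ --- the same quantity since $\dim\hh_K=1$.
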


\begin{proof}
Since the representation $\tilde{U}$ reduces to the character $\lambda$, it follows that $\ti_{\tilde{U}} = \ti (\hh_K)$, and so $\dim\ti_{\tilde{U}} = \dim \ti (\hh_K) = 1$.
Moreover, $K$ cannot be the zero kernel, so that $\dim\kk_\pi\ne 0$ for some $\pi\in\hat G$. For such $\pi$, let $T_\pi \in \ti (\kk_\pi)$ with $\tr{T_\pi} = 1$. We have $I_{\hh_\pi} \otimes T_\pi \in \ti_U$, and
$$
\ip{f}{\ev_\pi^\ast (I_{\hh_\pi} \otimes T_\pi) \ev_\pi f}_K = \ip{f(\pi)}{(I_{\hh_\pi} \otimes T_\pi) f(\pi)} = d_\pi \tr{T_\pi} \ne 0.
$$
Therefore, $\dim \tilde{\ti}_U = 1$,
and the result follows from Corollary \ref{cor:ext}.
\end{proof}


\begin{thebibliography}{10}

\bibitem{PSAQT82}
A.~S.~Holevo.
\newblock {\em Probabilistic and Statistical Aspects of Quantum Theory}.
\newblock North-Holland Publishing Co., Amsterdam, 1982.

\bibitem{OQP97}
P.~Busch, M.~Grabowski, and P.~J.~Lahti.
\newblock {\em Operational Quantum Physics}.
\newblock Springer-Verlag, Berlin, 1997.
\newblock second corrected printing.

\bibitem{Holevo79}
A.~S.~Holevo.
\newblock Covariant measurements and uncertainty relations.
\newblock {\em Rep. Math. Phys.}, 16:385--400, 1979.

\bibitem{LiTa94}
C.~K.~Li and B.~S.~Tam.
\newblock A note on extreme correlation matrices.
\newblock {\em SIAM J. Matrix Anal. Appl.}, 15:903--908, 1994.

\bibitem{KiPe06}
J.~Kiukas and J.~P.~Pellonp{\"a}{\"a}.
\newblock A note on infinite extreme correlation matrices.
\newblock {\em Linear Algebra Appl.}, in press.
\newblock math-ph/0612537.

\bibitem{Dariano04}
G.~M.~D'Ariano.
\newblock Extremal covariant quantum operations and positive operator valued
  measures.
\newblock {\em J. Math. Phys.}, 45:3620--3635, 2004.

\bibitem{ChDa04}
G.~Chiribella and G.~M.~D'Ariano.
\newblock Extremal covariant positive operator valued measures.
\newblock {\em J. Math. Phys.}, 45:4435--4447, 2004. 

\bibitem{Kholevo87}
A.~S.~Kholevo.
\newblock On a generalization of canonical quantization.
\newblock {\em Math. USSR Izvestiya}, 28:175--188, 1987.

\bibitem{DS}
N.~Dunford and J.~T.~Schwartz
\newblock {\em Linear Operators. I. General Theory}.
\newblock Interscience Publishers, Inc., New York, 1958.

\bibitem{CAHA95}
G.~B.~Folland.
\newblock {\em A Course in Abstract Harmonic Analysis}.
\newblock CRC Press, Boca Raton, FL, 1995.

\bibitem{QTOS76}
E.~B.~Davies.
\newblock {\em Quantum Theory of Open Systems}.
\newblock Academic Press, London, 1976.

\bibitem{Mackey52}
G.~W.~Mackey.
\newblock Induced representations of locally compact groups. {I}.
\newblock {\em Ann. of Math. (2)}, 55:101--139, 1952.

\bibitem{Cattaneo79}
U.~Cattaneo.
\newblock On {M}ackey's imprimitivity theorem.
\newblock {\em Comment. Math. Helv.}, 54:629--641, 1979.

\bibitem{CaDeTo04}
G.~Cassinelli, E.~De~Vito, and A.~Toigo.
\newblock Positive operator valued measures covariant with respect to an
  abelian group.
\newblock {\em J. Math. Phys.}, 45:418--433, 2004.

\bibitem{AN89}
G.~Pedersen.
\newblock {\em Analysis now}.
\newblock Springer-Verlag, New York, 1989.

\bibitem{Pedrick57}
G.~Pedrick.
\newblock {\em Theory of Reproducing Kernels of Hilbert Spaces of Vector Valued
  Functions}.
\newblock University of Kansas, Department of Mathematics, Lawrence, Kansas,
  1957.

\end{thebibliography}
\end{document}